\newtheorem{theorem}{Theorem}
\newtheorem{proposition}{Proposition}
\newtheorem{corollary}{Corollary}
\newtheorem{definition}{Definition}
\renewcommand{\qed}{\hfill\IEEEQED}
\DeclareMathOperator*{\argmax}{\arg\!\max}
\begin{document}
\allowdisplaybreaks

% paper title
% can use linebreaks \\ within to get better formatting as desired
% Do not put math or special symbols in the title.
\title{Capacity of the AWGN Channel with Random Battery Recharges}

% author names and affiliations
% use a multiple column layout for up to three different
% affiliations
\author{\IEEEauthorblockN{Dor Shaviv and Ayfer \"{O}zg\"{u}r}
\IEEEauthorblockA{Department of Electrical Engineering\\
Stanford University\\
Email: \{shaviv, aozgur\}@stanford.edu}
\thanks{This work was supported in part by the NSF CAREER award 1254786; by the Center for Science of Information (CSoI), an NSF Science and Technology Center, under grant agreement CCF-0939370; and by a Stanford Graduate Fellowship.}
}

% make the title area
\maketitle

% As a general rule, do not put math, special symbols or citations
% in the abstract
\begin{abstract}
%To be considered for an 2015 IEEE Jack Keil Wolf ISIT Student Paper Award.
We consider communication over the AWGN channel with a transmitter whose battery is recharged with RF energy transfer at random times known to the receiver. We assume that the recharging process is i.i.d. Bernoulli. We characterize the capacity of this channel as the limit of an $n$-letter maximum mutual information rate under both causal and noncausal transmitter knowledge of the battery recharges. With noncausal knowledge, it is possible to explicitly identify the maximizing input distribution, which we use to demonstrate that the capacity with noncausal knowledge of the battery recharges is strictly larger than that with causal knowledge. We then proceed to derive explicit upper and lower bounds on the capacity, which are within 1.05 bits/s/Hz of each other for all parameter values.  

\end{abstract}

\IEEEpeerreviewmaketitle

%---------------------------------------------------%
\section{Introduction}
%---------------------------------------------------%

There has been significant recent progress in building wireless radios that possess no conventional batteries but are powered with wireless energy transfer, with latest developments reporting smaller device sizes, better harvesting efficiencies and increased communication ranges and data rates~\cite{Arb1,Arb2}. For example, the ant-sized radios of~\cite{Arb1} use the energy provided through the downlink channel in order to transmit over the uplink channel. We model communication with such externally powered transmitters by using a simple model. See \Fref{fig:channel}. Here a transmitter equipped with a battery of size $\bar{B}$ is communicating to a receiver over the AWGN channel. The transmitter's battery is recharged at random times: we assume that at each time $t$, the battery is recharged with probability $p$ independent of previous time instants. We assume that the recharging times are known either causally or noncausally both at the transmitter and the receiver. While it is natural for the transmitter to be (at least causally) aware  of its battery recharging times, the knowledge of the receiver is motivated by applications such as~\cite{Arb1,Arb2} where it is the receiver that powers the transmitter. % (See \Fref{sec:definitions} for the motivation of this model).

The difficulty in characterizing the capacity of this setup lies in the fact that although the channel between the transmitter and the receiver is memoryless, the energy constraints on the transmitter lead to a random state for the system captured by the energy level of the battery. This state has memory and is input-dependent and even though the receiver can track the battery recharging times, the energy state of the battery, and therefore the state of the system, is known causally only at the transmitter but not at the receiver. We show that this state-dependent system is nevertheless equivalent to a conceptually simpler memoryless channel which we call the \emph{clipping channel}. The clipping channel admits real vectors as inputs, and outputs a clipped version of the input vector corrupted
by white Gaussian noise. The clipping length is random and follows a geometric distribution. The clipping channel is a memoryless channel with i.i.d. states.  Intuitively, each use of this channel corresponds to one \emph{epoch} over the original  channel with random battery recharges (RBR), where an epoch is the time period between two consecutive battery recharges. Using this equivalence, we provide an expression for the capacity of the RBR channel in \Fref{fig:channel}, and find an explicit formula for the approximate capacity of this channel which we show is within $1.05$ bits/channel use of the true capacity for all parameter values.

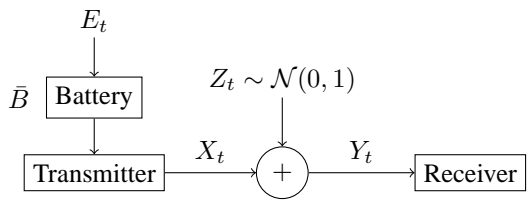
\begin{figure}[!t]
\centering
\begin{tikzpicture}
	\node[draw,rectangle] (Tx) at (0,0) {Transmitter};
	\node[draw,circle] (Sum) at (2.5,0) {$+$};
	\node[draw,rectangle] (Rx) at (5,0) {Receiver};
	\node[draw,rectangle] (Battery) at (0,1) {Battery};
	\node at (-1,1) {$\bar{B}$};
	
	\tikzstyle{every path}=[draw,->]
	\path (Battery) -- (Tx);
	\path (0,1.8) -- node[above,pos=0.1] {$E_t$} (Battery);
	\path (Tx) -- node[above] {$X_t$} (Sum);
	\path (Sum) -- node[above] {$Y_t$} (Rx);
	\path (2.5,1) -- node[above,pos=0.1] {$Z_t\sim\mathcal{N}(0,1)$} (Sum);
\end{tikzpicture}
\caption{System model.}
\label{fig:channel}
\end{figure}

This approximation is obtained by connecting the information-theoretic capacity of this channel to its long-term average throughput under optimal online power control. This latter communication-theoretic formulation of the problem assumes that there is an underlying transmission scheme which when allocated power $P_t$ at time time $t$ yields an instantaneous rate $\frac{1}{2}\log(1+P_t)$ and aims to maximize the long-term average throughput of the system by developing optimal online power control strategies~\cite{YangUlukus2012,Ozeletal2011,TutuncuogluYener2012,
online_infiniteB1,online_infiniteB2,online_infiniteB3,
online_infiniteB4,info2013}.
While only dynamic programming solutions are available for the online power control problem for  general energy harvesting systems~\cite{DP0,DP1,DP2,DP3}, we are able to explicitly characterize the optimal online power control strategy and the corresponding long-term average throughput for the specific model of interest here. This in turn yields an explicit formula for the approximate capacity of this system. In follow-up work, which we make simultaneously available on arXiv~\cite{DorISIT}, we were able to generalize the approach of the current paper to understand the capacity of energy harvesting systems powered by general i.i.d. energy arrival processes. While the results of~\cite{DorISIT} apply to the current case, we believe the results we develop in this paper specifically for the RBR channel are still of interest for a number of reasons: 1) The results we provide for this specific setup are stronger than those that follow from~\cite{DorISIT}. For example, while both~\cite{DorISIT} and the current paper characterize the capacity of the channel as an $n$-letter mutual information rate, the $n$-letter characterizations we obtain in this paper are much more explicit. This allows us, for example, to identify the maximizing input distribution for the $n$-letter expression when battery recharging times are known noncausally at the transmitter, which in turn allows us to show that noncausal knowledge of the battery recharges at the transmitter can strictly increase capacity over causal knowledge, even though the receiver also knows the i.i.d. battery recharging times. This result can be surprising given that for channels with i.i.d. states known both at the transmitter and the receiver, noncausal and causal knowledge of the states lead to the same capacity. Furthermore, we are able to explicitly solve the online power control problem for the RBR channel while~\cite{DorISIT} provides an approximately optimal online power control policy for the general case, which overall leads to an approximation of the capacity within 
$3.85$ bits/channel use, while the approximation in the current paper is within $1.05$ bits/channel use. 2) The generalization approach in~\cite{DorISIT} reveals the current setup as a canonical example for energy harvesting communication which is simple but yet captures most of the challenges involved in the general scenario. 3) Due to increasing interest in applications with external battery recharges, we believe the current setup can be of interest in its own right.

\subsection{Relation to Prior Work}
The setup we consider in this paper corresponds to a special case of the energy-harvesting communication channel, the capacity of which, despite significant recent interest \cite{OzelUlukus2012,Tutuncuogluetal2013,MaoHassibi2013,
DongOzgur2014,JogAnantharam2014,Ozeletal2014-2
%Ozeletal2014
}, remains an open problem.
In particular,~\cite{Tutuncuogluetal2013} considers a noiseless binary channel with a unit-sized battery where the battery recharges are known causally only at the transmitter. Our model resembles theirs in the fact that the energy arrival process is i.i.d. Bernoulli and each energy arrival fully recharges the battery. However, it is more general in the fact that we consider  noisy channels and battery size and the the input alphabet are not matched to each other. With a binary channel and unit battery, information can be only encoded in the timing of the unit-energy pulse which makes the setup of \cite{Tutuncuogluetal2013} equivalent to a timing channel. In our current case with an arbitrary battery size and continuous inputs, information can be encoded through real valued codewords and achieving capacity  requires to also devise an optimal power control strategy. The most closely related reference to our work is \cite{DongOzgur2014}~which considers an i.i.d. Bernoulli energy harvesting process where an energy packet of size $E$ is harvested with probability $p$ at each channel use and the transmitter is equipped with a battery of size $\bar{B}$ which can be either smaller or larger than $E$. (Reference~\cite{JogAnantharam2014} considers a special case of this model with $p=1$ in which case the harvesting process becomes deterministic).
Our model corresponds to a special case of the model in  \cite{DongOzgur2014} with $\bar{B}\leq E$. \cite{DongOzgur2014} provides upper and lower bounds on the capacity of this channel which are within 2.58 bits/s/Hz without providing an explicit expression for the capacity. To be more precise, 2.58 bits/s/Hz is the gap to capacity when the receiver has no information of the energy harvesting process. When the receiver has side information as we assume here, the gap can be readily decreased by $H(E_t)$, the entropy rate of the energy harvesting process, which is at most $1$ bit/s/Hz for Bernoulli arrivals. The contributions of the current paper with respect to \cite{DongOzgur2014} are: 1) we provide an explicit formula for the capacity by establishing an equivalence to the clipping channel; 2) we derive novel upper and lower bounds to the capacity in terms of a power control problem  for which we provide an explicit solution; this decreases the capacity approximation gap to 1.05 bits/s/Hz; 3) we show that the capacity with noncausal knowledge of the energy arrivals at the transmitter is strictly larger than the corresponding causal capacity.%\footnote{In particular, \cite{Ozeletal2014} claims that the capacity of an energy harvesting channel (with energy arrival information at the receiver) is the same with causal and noncausal energy arrival information at the transmitter. Our result provides a counter-example.}

\Fref{sec:definitions} describes our model for the channel with random battery recharges and \Fref{sec:main_results} contains the main results of the paper and the definition for the clipping channel.
\Fref{sec:proof_mainthm} provides the proof of our main theorem, namely the equivalence to the clipping channel.
In \Fref{sec:noncausal_strictly_greater} we show that noncausal observations of the energy arrivals can strictly increase capacity, and in \Fref{sec:bounds} we provide a derivation of capacity bounds.
%The proofs of our results are found in the appendices.
%Sections~\ref{sec:proof_mainthm} and~\ref{sec:bounds} contain the proofs of our results.

%---------------------------------------------------%
\section{System Model}
%---------------------------------------------------%
\label{sec:definitions}

We consider a transmitter powered by RF energy transfer which communicates to a receiver over an AWGN channel, i.e. the output at time $t$ is $Y_t=X_t+Z_t$,
where $X_t\in\mathbb{R}$ is the input to the channel and $Z_t\sim\mathcal{N}(0,1)$ is the noise.
We assume that the transmitter has a battery with finite capacity~$\bar{B}$ which is recharged with probability $p$ at each channel use, i.e.
the energy arrivals $E_t$ are i.i.d. Bernoulli RVs:
\[
	E_t=
	\begin{cases}
		\bar{B}&\text{w.p. }p\\
		0&\text{w.p. }1-p.
	\end{cases}
\]
The effort to shrink down the size of wireless sensors and actuators limits the amount of energy that can be harvested at any given time, as well as the capacity of the storage unit that can be accommodated by the device. This necessitates the recharging process to operate at a scale comparable to the symbol duration~\cite{Arb1}. The randomness in the energy transfer process can be due to fluctuations in the alignment of antennas and the position of nodes, as well as randomness in the energy transfer times. We assume that the recharging times are known causally to both the transmitter and the receiver. The knowledge of the energy arrivals at the receiver is motivated by the fact that often it is the receiver that powers the transmitter, and the transmitter can acknowledge its battery exceeding a certain threshold by sending a short pulse to simplify operation. We also consider the case of noncausal energy arrival information at the transmitter, mainly for comparison with the causal case.
 
Under this model, energy of the channel input symbol at each time slot is limited by the available energy in the battery. Let $B_t$ represent the available energy in the battery at time $t$. The system energy constraints can be described as
\begin{align}
	|X_t|^2&\leq B_t, 	\label{eq:EH_constraint}\\
	B_t&=\min\{B_{t-1}-|X_{t-1}|^2+E_t,\bar{B}\}\label{eq:EH_battery}.
\end{align}
This implies that at time $t$, either $B_t=\bar{B}$ w.p. $p$, or $B_t=B_{t-1}-|X_{t-1}|^2$ w.p. $1-p$. %Note that this implies also causal knowledge of $B_t$ at the transmitter.
We assume without loss of generality that $B_0=\bar{B}$,
which implies that we can also assume $E_1=\bar{B}$ w.p. 1.

An $(M,n)$ code for the random battery recharges (RBR) channel is a set of encoding functions
$f_t$ and a decoding function $g$:
\begin{align}
	f_t&:\mathcal{M}\times \mathcal{E}^t\to\mathcal{X}
		,\qquad t=1,\ldots,n,\label{eq:EH_encoding}\\
	g&:\mathcal{Y}^n\times\mathcal{E}^n\to\mathcal{M},
		\label{eq:EH_decoding}
\end{align}
where $\mathcal{E}=\{0,\bar{B}\}$, $\mathcal{X}=\mathcal{Y}=\mathbb{R}$
and $\mathcal{M}=\{1,\ldots,M\}$.
To transmit message $w\in\mathcal{M}$ at time $t=1,\ldots,n$, the transmitter sets $X_t=f_t(w,E^t)$.
The battery state $B_t$ is a deterministic function of $(X^{t-1},E^t)$, therefore also of $(w,E^t)$.
The functions $f_t$ must satisfy the energy constraint~\eqref{eq:EH_constraint}:
$|f_t(w,E^t)|^2\leq B_t(w,E^t)$.
The receiver sets $\hat{W}=g(Y^n,E^n)$.
The probability of error is
\[P_e^{\text{RBR}}=\frac{1}{M}\sum_{w=1}^{M}\Pr(\hat{W}\neq w\ |\ w\text{ was transmitted}).\]

The rate of an $(M,n)$ code is $R=\frac{\log M}{n}$.
A rate $R$ is achievable if for every $\varepsilon>0$ there exists a sequence of $(M,n)$ codes that satisfy $\frac{\log M}{n}\geq R-\varepsilon$ and $P_e^{\text{RBR}}\to0$ as $n\to\infty$.
The capacity $C_\text{RBR}$ is the supremum of all achievable rates.

When noncausal energy arrival information is available at the transmitter, the symbol transmitted at time $t$ can depend on the entire realization of the energy arrival process $E^n$. In this case, equation~\eqref{eq:EH_encoding} becomes
\[
	f_t:\mathcal{M}\times \mathcal{E}^n\to\mathcal{X}
		,\qquad t=1,\ldots,n,
\]
with remaining definitions unchanged.

%---------------------------------------------------%
\section{Main Results}
\label{sec:main_results}
%---------------------------------------------------%
The RBR channel described above has a random state $B_t$ which depends on the input and the exogenous energy arrival process. An explicit expression for the capacity is so far only available in terms of the Verd\'{u}-Han framework~\cite{MaoHassibi2013}.
However, we will see that this channel is conceptually equivalent to a clipping channel. In the sequel, we define a sequence of clipping channels parametrized by the length $N$ of the input vector.
For each $N$, the channel is memoryless and time-invariant, and lends to an almost trivial analysis of capacity.

\begin{definition}[Clipping Channel]
The $(N)$-clipping channel is a memoryless channel which at time $i$ admits inputs $\tilde{X}_i^{(N)}=(\tilde{X}_{i1}^{(N)},\ldots,\tilde{X}_{iN}^{(N)})\in\tilde{\mathcal{X}}^{(N)}=\mathbb{R}^{N}$ and outputs $\tilde{Y}_i^{(N)}=(\tilde{Y}_{i1}^{(N)},\ldots,\tilde{Y}_{iN}^{(N)})\in\tilde{\mathcal{Y}}^{(N)}=\mathbb{R}^{N}$. The inputs must satisfy the energy constraint 
\begin{equation}
	\|\tilde{X}_i^{(N)}\|^2=\sum_{j=1}^N|\tilde{X}_{ij}^{(N)}|^2\leq\bar{B}.
	\label{eq:clp_constraint}
\end{equation}
Each use of the channel is associated with a state variable $L_i$, called the clipping length. $L_i$ are i.i.d. RVs, independent of the input,
and follow a geometric distribution with parameter $p$:
\[ \Pr(L_i=k)=(1-p)^{k-1}p,\qquad k=1,2,\ldots \]
The states $L_i$ are known at the receiver but not at the transmitter.
At channel use $i$, if $L_i\leq N$, the channel output is given by
\[
	\tilde{Y}_{ij}^{(N)}=
	\begin{cases}
		\tilde{X}_{ij}^{(N)}+Z_{ij}&,j\leq L_i\\
		0&,j>L_i
	\end{cases}
\]
where $Z_{ij}\sim\mathcal{N}(0,1)$ are i.i.d. for different $i,j$, independent of $\tilde{X}_i^{(N)}$ and $L_i$. If $L_i>N$, the channel outputs $\tilde{Y}_i^{(N)}=0$.

An $(M,n)$ code for the clipping channel consists of encoding and decoding functions
\begin{align}
	\tilde{f}_i^{(N)}&:\mathcal{M}\to \tilde{\mathcal{X}}^{(N)},\qquad i=1,\ldots,n,
		\label{eq:clp_encoding}\\
	\tilde{g}^{(N)}&:(\tilde{\mathcal{Y}}^{(N)})^n\times\mathcal{L}^n\to\mathcal{M},
		\label{eq:clp_decoding}
\end{align}
where 
%$\tilde{\mathcal{X}}=\tilde{\mathcal{Y}}=\mathbb{R}^\mathbb{N}$,
$\mathcal{L}=\mathbb{N}$,
and $\mathcal{M}=\{1,\ldots,M\}$.
The transmitted codeword is $\tilde{X}_i^{(N)}=\tilde{f}^{(N)}_i(w)$, $i=1,\ldots,n$, and the encoding functions must satisfy
$\|\tilde{f}_i^{(N)}(w)\|^2\leq\bar{B}$.
The decoded message is $\hat{W}=\tilde{g}^{(N)}((\tilde{Y}^{(N)})^n,L^n)$. The rate of the code is $R=\frac{\log M}{n}$ and the capacity $C_\text{clp}^{(N)}$ is defined in the standard way as the supremum of all achievable rates.
\label{def:clipping}
\end{definition}

Intuitively, at each channel use the channel chooses an i.i.d. clipping length $L_i$ and outputs only the first $L_i$ components of the input vector under additive white Gaussian noise (The case $L_i>N$ and the corresponding behaviour of the channel are rather technicalities; in the sequel we will be interested in $N\to\infty$ in which case the length of the input vector goes to infinity and the probability of $L_i>N$ goes to zero).
The $(N)$-clipping channel is a standard vector memoryless channel with i.i.d. state information $L_i$ available at the receiver. The capacity of this channel is well known and is given by
\begin{align*}
	C_\text{clp}^{(N)}
	&=\max_{\substack{p(\tilde{x}^{(N)}):\\ \|\tilde{X}^{(N)}\|^2\leq\bar{B}}} I(\tilde{X}^{(N)};\tilde{Y}^{(N)},L)\\
	&=\max_{\substack{p(\tilde{x}^{(N)}):\\ \|\tilde{X}^{(N)}\|^2\leq\bar{B}}} I(\tilde{X}^{(N)};\tilde{Y}^{(N)}|L).
\end{align*}
%With a slight abuse of notation, we denote the $i$-th component of $\tilde{X}^{(N)}$ (resp. $\tilde{Y}^{(N)}$) by $X_i$ (resp. $Y_i$), s.t. $\tilde{X}^{(N)}=(X_1,\ldots,X_N)=X^N$.
We can rewrite this expression in the following explicit form
\begin{align}
	C_\text{clp}^{(N)}
	&=\max_{\substack{p(\tilde{x}^{(N)}):\\ \|\tilde{X}^{(N)}\|^2\leq\bar{B}}} \sum_{k=1}^{N}p(1-p)^{k-1}I(\tilde{X}^{(N)};\tilde{Y}^{(N)}|L=k)\nonumber\\
	&=\max_{\substack{p(x^N):\\ \|X^N\|^2\leq\bar{B}}} 
		\sum_{k=1}^{N}p(1-p)^{k-1}I(X^k;X^k+Z^k),
		\label{eq:clp_capacity}
\end{align}
where in the last line $X^N=(X_1,\ldots,X_N)$ (and $X^k=(X_1,\ldots,X_k)$), and $Z^N=(Z_1,\ldots,Z_N)$ is a vector with  i.i.d. Gaussian entries, i.e., $Z_i\sim\mathcal{N}(0,1)$ i.i.d for $i=1,\dots,N$.

It is easy to see that $C_\text{clp}^{(N)}$ is monotonically increasing in $N$ and is also bounded above (see also Proposition~\ref{prop:bounds}), therefore it has a limit, which we call $C_\text{clp}$:
\begin{equation}
	C_\text{clp}\triangleq\lim_{N\to\infty}C_\text{clp}^{(N)}
	=\sup_{N\geq1}C_\text{clp}^{(N)}.
\end{equation}
Intuitively, taking $N\to\infty$ gives a channel with infinitely long input and output, thus simulating an epoch in the original channel, i.e. the time interval between two successive battery recharges. Indeed, we show that $C_\text{clp}$ is in fact, up to a constant factor, the capacity of the RBR channel $C_\text{RBR}$. %The following theorem is the main result of the paper:
We bring the following theorem without proof:
\begin{theorem}[Channel Equivalence]
\label{thm:equivalence}
\[ C_\text{RBR}=p\cdot C_\text{clp}. \]
\end{theorem}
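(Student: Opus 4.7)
The plan is to establish the equivalence in two directions by exploiting the natural decomposition of the RBR channel into \emph{epochs} --- intervals between consecutive battery recharges. Since the energy arrival process is i.i.d.\ Bernoulli($p$), epoch lengths are i.i.d.\ geometric with parameter $p$ and mean $1/p$, and they are known at both transmitter and receiver (causal knowledge suffices). Within an epoch, the battery starts full at $\bar{B}$ and receives no new energy until the next recharge, so by \eqref{eq:EH_constraint}--\eqref{eq:EH_battery} the total squared amplitude of the symbols transmitted in the epoch is bounded by $\bar{B}$. This is exactly the constraint \eqref{eq:clp_constraint} of the clipping channel. The quantitative factor of $p$ then arises because $m$ RBR channel uses yield roughly $mp$ complete epochs.

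\emph{Achievability, $C_\text{RBR} \geq p \cdot C_\text{clp}$.} Fix $N$ and $\varepsilon > 0$ and take an $(M,n)$ code for the $(N)$-clipping channel with rate $R$ arbitrarily close to $C_\text{clp}^{(N)}$ and vanishing error. I would simulate it over the RBR channel as follows. Both terminals use $E^n$ to identify epochs. At the start of the $i$-th epoch, the transmitter sends the coordinates of $\tilde{X}_i^{(N)} = \tilde{f}_i^{(N)}(w)$ one per RBR channel use; if $L_i \le N$ the remaining coordinates are discarded, and if $L_i > N$ the slots beyond $N$ are filled with zeros until the next recharge. The energy constraint is automatically met, and the receiver's observations in the $i$-th epoch, together with $L_i$, reproduce exactly the clipping-channel output $\tilde{Y}_i^{(N)}$. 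A law-of-large-numbers argument for the partial sums $\sum_i L_i$ shows that $m$ RBR channel uses almost surely contain at least $(p - \delta)m$ complete epochs, so the RBR rate is at least $(p-\delta)R$. Letting $\delta \to 0$, $R \to C_\text{clp}^{(N)}$, and $N \to \infty$ yields the claim.

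\emph{Converse, $C_\text{RBR} \leq p \cdot C_\text{clp}$.} Start with an $(M, m)$ RBR code of rate $R$ with vanishing $P_e^{\text{RBR}}$. Let $N_m$ be the number of complete epochs in the first $m$ slots. Group the RBR transmitted symbols by epoch: the symbols of the $i$-th epoch form a random vector of length $L_i$ whose squared norm is at most $\bar{B}$. Truncating/padding each such vector to length $N$ defines an $(M, N_m)$ code for the $(N)$-clipping channel; since the RBR receiver's side information $(Y^m, E^m)$ is equivalent to the clipping-channel side information $(\tilde{Y}^{(N),N_m}, L^{N_m})$ on the event that no epoch exceeds $N$, the induced code achieves the same error probability up to a term $\Pr(\max_i L_i > N) \to 0$ as $N \to \infty$. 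Applying the capacity of the clipping channel and Fano's inequality gives $R \cdot m/N_m \le C_\text{clp}^{(N)} + o(1)$. Since $N_m / m \to p$ almost surely, this yields $R \le p \cdot C_\text{clp}^{(N)}$, and letting $N \to \infty$ completes the converse.

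The main obstacle I expect is the careful sequencing of limits and the control of boundary effects: (i) the last, possibly incomplete epoch in a block of $m$ RBR channel uses; (ii) the truncation event $L_i > N$, which must be pushed to zero probability only after the block length has been sent to infinity; and (iii) concentration of $\sum_i L_i$ around $n/p$ to justify the rate conversion. Each ingredient is standard (Chernoff bounds for geometric sums, Fano's inequality), but the proof must interleave the limits $m \to \infty$, $\varepsilon \to 0$, and $N \to \infty$ in the correct order so that the rate $R$ and $C_\text{clp}^{(N)}$ can be driven to their suprema simultaneously.
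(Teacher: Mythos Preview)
Your overall plan matches the paper's: prove both inequalities by translating codes between the two channels via the epoch decomposition, and recover the factor $p$ from the law of large numbers for geometric epoch lengths. The achievability direction is essentially the paper's argument.

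There is, however, a real gap in your converse that you have not listed among your anticipated obstacles. When you group the RBR symbols by epoch, the symbols transmitted in the $i$-th epoch are computed by $f_t(w,E^t)$ for $t$ in that epoch, and $E^t$ determines the lengths $L^{i-1}$ of all \emph{previous} epochs. Consequently the ``clipping encoder'' you induce for the $i$-th use is a function of $(w,L^{i-1})$, not of $w$ alone as required by Definition~\ref{def:clipping} and \eqref{eq:clp_encoding}. So what you construct is not a code for the $(N)$-clipping channel, and Fano's inequality only bounds the rate by the capacity of the clipping channel \emph{with strictly causal state feedback at the transmitter}, which is a priori larger than $C_\text{clp}^{(N)}$.

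The paper closes this gap at the very start of the proof: since the clipping channel is memoryless with i.i.d.\ state known at the receiver, strictly causal transmitter knowledge of $L^{i-1}$ (equivalently, feedback) does not increase capacity. It then works throughout with the enlarged encoder class $\tilde{f}_i^{(N)}:\mathcal{M}\times\mathcal{L}^{i-1}\to\tilde{\mathcal{X}}^{(N)}$, for which both directions go through cleanly. You should insert this observation; it is short but not optional, and without it the converse does not connect to $C_\text{clp}$. A secondary point: your converse uses a random clipping blocklength $N_m$ and then invokes Fano, which is awkward; the paper instead fixes both blocklengths ($n$ for clipping, $n'\approx n/p$ for RBR) and absorbs the mismatch into an error event whose probability is controlled by the union bound and the LLN. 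This is closer to what you describe under ``boundary effects,'' but be aware that it requires the fixed-$n$ formulation.
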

\begin{IEEEproof}
See \Fref{sec:proof_mainthm}.
\end{IEEEproof}
Although the two channels are clearly related, with each use of the clipping channel corresponding to one epoch over the RBR channel, the fact that they are equivalent may be a priori unclear. Indeed, these two channels have quite different characteristics: the first has an input-dependent state with memory which is causally known at the transmitter; the second is a simple memoryless channel with states unknown at the transmitter. The intuitive connection is that whenever there
is a battery recharge in the first channel, the system \emph{resets}, and any memory of
the channel (which is embedded in the state of the battery) is erased. However, even with this intuition, it may be unclear why one could not benefit from transmitting the symbols in each epoch in a sequential manner (having as side information the time since the last battery recharge) as compared to one-shot transmission of the ``epoch symbol'' with no side information. The proof of the theorem formally argues that codes designed for one channel can be used over the other channel with similar performance.  
 
Using Theorem~\ref{thm:equivalence} and~\eqref{eq:clp_capacity}, we obtain the capacity of the AWGN channel with random battery recharges:
\begin{corollary}[Capacity of the RBR Channel]
\label{cor:capacity}
The capacity of the channel defined in \Fref{sec:definitions} is given by
\begin{equation}
	C_\text{RBR}=\lim_{N\to\infty}\max_{\substack{p(x^N):\\ \|X^N\|^2\leq\bar{B}}}
		\sum_{k=1}^{N}p^2(1-p)^{k-1}I(X^k;X^k+Z^k).
\label{eq:EH_capacity}
\end{equation}
\end{corollary}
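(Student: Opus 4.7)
The plan is to combine Theorem~\ref{thm:equivalence} with the explicit formula for $C_\text{clp}^{(N)}$ already derived in equation~\eqref{eq:clp_capacity}, so the argument reduces to routine algebraic manipulation once those two ingredients are in place.

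First I would invoke Theorem~\ref{thm:equivalence} to write $C_\text{RBR} = p \cdot C_\text{clp}$, and then use the definition $C_\text{clp} = \lim_{N\to\infty} C_\text{clp}^{(N)}$ (the monotonicity in $N$ and the existence of an upper bound that validates the limit are both mentioned in the text preceding the corollary, and are re-established via Proposition~\ref{prop:bounds}). Substituting the expression in~\eqref{eq:clp_capacity} yields
\begin{equation*}
    C_\text{RBR} \;=\; p \cdot \lim_{N\to\infty} \max_{\substack{p(x^N):\\ \|X^N\|^2\leq\bar{B}}} \sum_{k=1}^N p(1-p)^{k-1}\, I(X^k;X^k+Z^k).
\end{equation*}

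Next I would pull the constant $p$ inside the maximum and inside the finite sum; since the maximization is over the input distribution only (not over $p$) and the sum has finitely many terms for each $N$, this is unambiguous and produces the factor $p^2(1-p)^{k-1}$ appearing in~\eqref{eq:EH_capacity}. Finally, the same argument moves the factor $p$ inside the outer limit, giving exactly the claimed formula. There is no real obstacle here: the only thing to check is that scaling a sequence of maxima by the positive constant $p$ commutes with the limit, which is immediate because $a_N \to a$ implies $p a_N \to p a$ for any constant $p$. All substantive content of the corollary sits in Theorem~\ref{thm:equivalence} and in the standard capacity formula for channels with receiver-side i.i.d. state information that was used to derive~\eqref{eq:clp_capacity}.
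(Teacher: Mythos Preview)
Your proposal is correct and matches the paper's approach exactly: the paper introduces the corollary with the sentence ``Using Theorem~\ref{thm:equivalence} and~\eqref{eq:clp_capacity}, we obtain the capacity of the AWGN channel with random battery recharges,'' and gives no further argument. The only content is precisely the substitution and the trivial commutation of the constant $p$ with the limit and the finite sum that you describe.
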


%\emph{Remark:}
%Although obtained for the AWGN channel, this result can be easily extended to any memoryless channel.
%In particular, the channel considered in~\cite{Tutuncuogluetal2013} has the same property of Bernoulli battery recharges.
%In particular, it can be seen (exceeds the scope of this paper) that the upper bound in~\cite[Section IV]{Tutuncuogluetal2013} can be derived from~\eqref{eq:EH_capacity}.

It is easy to extend this result to the case of \emph{noncausal} observations of the energy arrival process.
In this case, it can be shown that the channel is equivalent to a clipping channel with state $L_i$ available at both the transmitter and the receiver.
The capacity of such a channel is also standard and is obtained by optimizing over all input distributions conditioned on the state:
\[
	C_\text{clp,noncausal}^{(N)}=
	\max_{\substack{p(\tilde{x}^{(N)}|l):\\ \|\tilde{X}^{(N)}\|^2\leq\bar{B}}} I(\tilde{X}^{(N)};\tilde{Y}^{(N)}|L).
\]
Using the equivalence and writing the above expression explicitly (in a form analogous to \eqref{eq:clp_capacity}) we get the following result:
\begin{theorem}[Noncausal Capacity]
\label{thm:capacity_noncausal}
The capacity of the channel defined in \Fref{sec:definitions} with energy arrival information available noncausally at the transmitter and the receiver is given by
\begin{equation}
\label{eq:EH_capacity_noncausal}
	C_\text{RBR,noncausal}=
	\sum_{k=1}^{\infty}p^2(1-p)^{k-1}
	\max_{\substack{p(x^k):\\ \|X^k\|^2\leq\bar{B}}}
	I(X^k;X^k+Z^k).
\end{equation}
\end{theorem}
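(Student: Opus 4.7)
The plan is to extend the channel equivalence of \Fref{thm:equivalence} to the noncausal setting and then apply the standard capacity formula for memoryless channels with i.i.d. state known at both encoder and decoder. Concretely, I would first establish $C_\text{RBR,noncausal} = p \cdot C_\text{clp,noncausal}$, where $C_\text{clp,noncausal}$ is the capacity of the clipping channel when the clipping length $L_i$ is revealed to the transmitter as well as to the receiver, and then evaluate $C_\text{clp,noncausal}$ in closed form.

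For the equivalence step I would mimic the construction used in \Fref{sec:proof_mainthm} for the causal case. The key observation is that a noncausal RBR encoder with access to $E^n$ knows the positions of all future battery recharges and hence the entire sequence of epoch durations $L_1,L_2,\ldots$; this is exactly the information a noncausal clipping-channel encoder needs. The code conversions in both directions therefore carry over verbatim from the causal proof, with the only modification being that the translated clipping-channel code is now allowed to depend on the full state sequence. The factor $p$ comes, as in the causal case, from the fact that the expected number of channel uses per epoch is $1/p$, so one use of the clipping channel corresponds to $1/p$ uses of the RBR channel on average.

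Once the equivalence is in place, $C_\text{clp,noncausal}^{(N)}$ is given by the standard formula
\[
    C_\text{clp,noncausal}^{(N)} = \max_{p(\tilde{x}^{(N)}|l):\,\|\tilde{X}^{(N)}\|^2 \leq \bar{B}} I(\tilde{X}^{(N)}; \tilde{Y}^{(N)} \mid L).
\]
Splitting this mutual information by conditioning on $L=k$ and noting that for $k\leq N$ only the first $k$ coordinates of $\tilde{X}^{(N)}$ affect $\tilde{Y}^{(N)}$, the per-state maximizations decouple and reduce to $\max_{p(x^k):\,\|X^k\|^2 \leq \bar{B}} I(X^k; X^k+Z^k)$. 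Taking $N\to\infty$ and multiplying by $p$ yields \eqref{eq:EH_capacity_noncausal}; the limit passes through the infinite sum by monotone convergence, since every term is non-negative and uniformly bounded by $\tfrac{k}{2}\log(1+\bar{B}/k) \leq \bar{B}/(2\ln 2)$. The main obstacle is checking the equivalence of the noncausal RBR and noncausal clipping channels; the rest of the argument is essentially bookkeeping once the analog of \Fref{thm:equivalence} is available.
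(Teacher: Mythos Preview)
Your proposal is correct and follows essentially the same route as the paper's proof: establish $C_\text{RBR,noncausal}=p\cdot C_\text{clp,noncausal}$ by re-running the code-translation argument of \Fref{sec:proof_mainthm} with the clipping-channel encoder now allowed to depend on the full state sequence $L^n$, then evaluate $C_\text{clp,noncausal}^{(N)}$ via the standard formula for i.i.d.\ state at both terminals and pass to the limit. The paper adds one small remark you leave implicit, namely that for an i.i.d.\ state known at both transmitter and receiver the causal and noncausal capacities coincide, which is what justifies reusing the causal equivalence machinery essentially unchanged; conversely, your explicit mention of monotone convergence for the $N\to\infty$ limit is a detail the paper glosses over.
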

\begin{proof}
See Appendix~\ref{sec:noncausal_capacity_proof}.
\end{proof}
It is possible to explicitly identify the maximizing input distribution in the above expression by using the results of~\cite{Smith1971,ShamaiBarDavid1995,
ChanHranilovicKschischang2005}, which characterize the capacity of amplitude-constrained channels. In particular, \cite{ChanHranilovicKschischang2005} shows that the maximizing $X^k$ in \eqref{eq:EH_capacity_noncausal} is distributed over a finite set of $k$-dimensional spheres with uniform phase, where the number of spheres is determined by the value of $\bar{B}$ (ex. when $\bar{B}$ is very small, $X^k$ is uniformly distributed over a single sphere of radius $\sqrt{\bar{B}}$). Using this result, we suggest the following proposition.
\begin{proposition}
\label{prop:noncausal_strictly_greater}
Noncausal observations of the energy arrival process strictly increase capacity. That is,
\[ C_\text{RBR,noncausal}>C_\text{RBR}. \]
\end{proposition}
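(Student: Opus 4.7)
The plan is to first establish the weak inequality $C_\text{RBR}\le C_\text{RBR,noncausal}$, which follows immediately by swapping maximization and summation in~\eqref{eq:EH_capacity} and~\eqref{eq:EH_capacity_noncausal}: for any admissible $p(x^N)$, $\sum_{k=1}^{N}p^2(1-p)^{k-1}I(X^k;X^k+Z^k)\le\sum_{k=1}^{N}p^2(1-p)^{k-1}M_k$, where $M_k\triangleq\max_{p(x^k):\|X^k\|^2\le\bar B}I(X^k;X^k+Z^k)$. Maximizing over $p(x^N)$ and letting $N\to\infty$ gives $C_\text{RBR}\le C_\text{RBR,noncausal}$.

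To upgrade to strict inequality, I would show that a fixed positive gap already arises from the $k=1,2$ terms: namely, there exists $\delta>0$ such that for every $p_2(x_1,x_2)$ with $|x_1|^2+|x_2|^2\le\bar B$,
\[
p^2 I(X_1;X_1+Z_1)+p^2(1-p)\,I(X^2;X^2+Z^2)\le p^2 M_1+p^2(1-p)M_2-\delta.
\]
Given this gap, the $k\ge 3$ terms contribute at most $\sum_{k\ge 3}p^2(1-p)^{k-1}M_k$, so uniformly in $N$ and in the admissible $p(x^N)$ we obtain $\sum_{k=1}^{N}p^2(1-p)^{k-1}I(X^k;X^k+Z^k)\le C_\text{RBR,noncausal}-\delta$. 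Taking the max and sending $N\to\infty$ yields $C_\text{RBR}\le C_\text{RBR,noncausal}-\delta<C_\text{RBR,noncausal}$.

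The 2-D gap itself is where the structural amplitude-constrained capacity results enter. The set of 2-D probability measures supported in the ball of radius $\sqrt{\bar B}$ is weakly compact and the left side of the display above is upper-semicontinuous in $p_2$, so the supremum is attained. If equality held, then the attainer would have to attain $M_2$ as a 2-D distribution and simultaneously have its $X_1$-marginal attain $M_1$; by uniqueness of the amplitude-constrained AWGN maximizers --- \cite{Smith1971} for the scalar case and \cite{ChanHranilovicKschischang2005} for the vector case --- this forces the attainer to equal $p_2^\ast$ and its $X_1$-marginal to equal $p_1^\ast$. However, $p_1^\ast$ is purely atomic (supported on finitely many points of the real line) whereas $p_2^\ast$ is rotationally symmetric and supported on a finite union of concentric circles (plus possibly the origin), so the $X_1$-marginal of $p_2^\ast$ is a mixture of arcsine densities with a nontrivial absolutely continuous component, contradicting the required equality.

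The main obstacle is the technical cleanup of this last step: verifying weak compactness and upper semicontinuity of mutual information on the compact input support (standard for AWGN since the channel density is bounded and continuous in its arguments), and invoking the uniqueness clauses of the amplitude-constrained capacity theorems in the form needed. Once these are in place, the qualitative mismatch between the discrete $p_1^\ast$ and the continuous $X_1$-marginal of $p_2^\ast$ immediately yields the strict inequality.
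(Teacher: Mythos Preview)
Your proposal is correct and follows essentially the same approach as the paper: both reduce the question to the first two terms and exploit the incompatibility between the unique scalar amplitude-constrained optimizer of \cite{Smith1971} (purely atomic) and the $X_1$-marginal of the unique two-dimensional optimizer of \cite{ShamaiBarDavid1995,ChanHranilovicKschischang2005} (absolutely continuous on the line). The only stylistic difference is that you extract a uniform quantitative gap $\delta>0$ on the 2-D problem via compactness and upper semicontinuity, whereas the paper argues by contradiction through the limiting equality $\lim_n[I(g_n(X_1))+(1-p)I(g_n(X^2))]=I(f_1(X_1))+(1-p)I(f_2(X^2))$ and then bounds the left side by the optimal two-term sum $I(g_2(X_1))+(1-p)I(g_2(X^2))$.
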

\begin{IEEEproof}
See \Fref{sec:noncausal_strictly_greater}.
\end{IEEEproof}
This result may be surprising given that for a memoryless channel with i.i.d. state $S$, the capacity with side information at both the transmitter and the receiver is given by $I(X;Y|S)$, whether the side information is available causally or noncausally. The difference here is that even though the battery recharges $E_t$ are i.i.d. and known to both the transmitter and the receiver, the state of the system is captured by $B_t$ rather than $E_t$, which has  memory and is unknown to the receiver due to its input-dependence. 
%The proof of the proposition is omitted due to lack of space.
%\begin{proof}
%See Appendix~\ref{sec:noncausal_strictly_greater}.
%\end{proof}
%The claim in this proposition
%However, it
The fact that noncausal knowledge of the energy arrivals strictly increases capacity
can be also observed by using the upper and lower bounds on the causal and noncausal capacities developed below.

Despite being relatively simpler than previous results\footnote{
%In a simultaneous submission to ISIT 
In~\cite{ShavivNguyenOzgur2015}, we characterize the capacity of the general energy harvesting channel
%under causal and noncausal knowledge of the energy arrivals at the transmitter
in the form 
\[C=\lim_{n\to\infty}\frac{1}{n} \sup I(U^n;Y^n),\]
where the domain of the optimization problem is suitably defined. Note that the capacity expressions in \eqref{eq:EH_capacity} and \eqref{eq:EH_capacity_noncausal} are much more explicit, and in particular, it is this explicit form that allows us to identify the maximizing input distribution in \eqref{eq:EH_capacity_noncausal}.}, \eqref{eq:EH_capacity} and \eqref{eq:EH_capacity_noncausal} are difficult to compute explicitly. In particular,  \eqref{eq:EH_capacity} is a multi-letter expression that involves optimization over an infinite dimensional space. Therefore, we wish to find suitable approximations. More specifically, we provide an upper and a lower bound, separated by a constant gap of approximately 1.05 bits:
\begin{proposition}[Capacity Bounds]
\label{prop:bounds}
The capacity of the RBR channel is bounded by:
\begin{equation}
\label{eq:bounds}
	\bar{C}-\frac{1}{2}\log\left(\frac{\pi e}{2}\right)
	\leq C_\text{RBR}
	\leq \bar{C},
\end{equation}
where
\begin{equation}
\label{eq:def_Cbar}
	\bar{C}\triangleq
	\lim_{N\to\infty}
	\max_{\substack{\{\mathcal{E}_i\}_{i=1}^{N}:\\
					\mathcal{E}_i\geq0\ ,i=1,\ldots,N\\
					\sum_{i=1}^{N}\mathcal{E}_i\leq\bar{B}}}
	\sum_{i=1}^{N}p(1-p)^{i-1}\frac{1}{2}\log(1+\mathcal{E}_i).
\end{equation}
\end{proposition}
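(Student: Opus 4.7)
My plan is to prove the two inequalities separately by working with the explicit multi-letter capacity expression from Corollary~\ref{cor:capacity}. Both directions rely on the same algebraic manipulation: after expanding $I(X^k;X^k+Z^k)$ into single-letter mutual information terms and swapping the order of the two resulting sums, the outer weight $p^2(1-p)^{k-1}$ converts into $p(1-p)^{i-1}\bigl(1-(1-p)^{N-i+1}\bigr)$, which differs from the geometric weights appearing in $\bar C$ only by a boundary term that vanishes as $N\to\infty$.

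For the upper bound $C_\text{RBR}\le\bar C$, I would invoke memorylessness of the AWGN channel together with the chain rule to get $I(X^k;X^k+Z^k)\le\sum_{i=1}^{k}I(X_i;Y_i)$, and then apply the Gaussian maximum-entropy bound $I(X_i;Y_i)\le\tfrac{1}{2}\log(1+\mathcal E_i)$ with $\mathcal E_i\triangleq E|X_i|^2$. The almost-sure constraint $\|X^N\|^2\le\bar B$ yields $\sum_{i=1}^N\mathcal E_i\le\bar B$, so after swapping the sums and using $1-(1-p)^{N-i+1}\le 1$ I arrive at $\sum_{i=1}^N p(1-p)^{i-1}\tfrac{1}{2}\log(1+\mathcal E_i)$. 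Taking the supremum over feasible $\{\mathcal E_i\}$ bounds the $N$-th finite approximation of $C_\text{RBR}$ by that of $\bar C$, and letting $N\to\infty$ closes this direction.

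For the lower bound $C_\text{RBR}\ge\bar C-\tfrac{1}{2}\log(\pi e/2)$, I would supply an explicit input distribution. Given a finite $N_0$ and any feasible allocation $\{\mathcal E_i^*\}_{i=1}^{N_0}$ with $\sum_i\mathcal E_i^*\le\bar B$, I let the coordinates of $X^N$ be independent, with $X_i$ uniform on $[-\sqrt{\mathcal E_i^*},\sqrt{\mathcal E_i^*}]$ for $i\le N_0$ and $X_i=0$ otherwise, so that $\|X^N\|^2\le\bar B$ holds pathwise. The entropy power inequality, using $h(X_i)=\tfrac{1}{2}\log(4\mathcal E_i^*)$ and $h(Z_i)=\tfrac{1}{2}\log(2\pi e)$, gives $I(X_i;Y_i)\ge\tfrac{1}{2}\log\bigl(1+\tfrac{2\mathcal E_i^*}{\pi e}\bigr)$. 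An elementary check reducing to $\pi e\ge 2$ shows this is in turn lower bounded by $\tfrac{1}{2}\log(1+\mathcal E_i^*)-\tfrac{1}{2}\log(\pi e/2)$. Substituting into the $n$-letter expression (using that $I(X^k;X^k+Z^k)$ decomposes as $\sum_{i=1}^{\min(k,N_0)}I(X_i;Y_i)$ by independence), swapping the two sums, and letting $N\to\infty$ drives each factor $1-(1-p)^{N-i+1}$ to $1$. Finally taking the supremum over $\{\mathcal E_i^*\}$ and letting $N_0\to\infty$ yields the claim, using $\sum_{i=1}^\infty p(1-p)^{i-1}=1$ so that the constant loss is exactly $\tfrac{1}{2}\log(\pi e/2)$.

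The main technical care needed is the order of the limit $N\to\infty$, the supremum over allocations, and the limit $N_0\to\infty$ in the lower bound, which I would address by fixing $N_0$ and the allocation first and only then passing the other two to the limit. The other conceptual choice is the surrogate input distribution: Gaussian coordinates, while optimal in an average-power sense, cannot satisfy the pathwise constraint $\|X^N\|^2\le\bar B$; the uniform-on-interval choice above is the simplest input that (i) satisfies this constraint almost surely by construction and (ii) is only $\tfrac{1}{2}\log(\pi e/2)$ bits worse than Gaussian through the entropy power inequality, which is precisely what produces the stated gap.
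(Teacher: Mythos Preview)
Your proposal is correct and follows essentially the same route as the paper: for the upper bound you both relax to second moments, use $I(X^k;Y^k)\le\sum_i I(X_i;Y_i)$, swap the double sum to get the geometric weights, and invoke the Gaussian maximum-entropy bound; for the lower bound you both take independent amplitude-constrained coordinates and reduce to a per-coordinate bound on the scalar amplitude-constrained AWGN capacity. The only minor variation is in that last step: the paper cites the Ozarow--Wyner/Smith bound $C_{\text{Smith}}(\mathcal{E})\ge\tfrac12\log(1+\mathcal{E}/3)-\tfrac12\log(\pi e/6)$ and then weakens it, whereas you derive the same final inequality directly via the entropy power inequality with a uniform input---both yield the identical $\tfrac12\log(\pi e/2)$ gap.
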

\begin{IEEEproof}
See \Fref{sec:bounds}.
\end{IEEEproof}
It can be shown that the upper bound $\bar{C}$ in \eqref{eq:def_Cbar} corresponds to the online power control problem, extensively studied in the literature in the general framework of energy-harvesting channels \cite{YangUlukus2012,TutuncuogluYener2012,Ozeletal2011}. Here, one assumes that there is an underlying transmission scheme
operating at a finer time-scale, such that allocating power $P$ to
this scheme yields an information rate $r(P) = \frac{1}{2} \log(1 + P)$,
and focuses on the optimal power allocation policy satisfying the energy constraints on the transmitter. For the specific channel of interest here, this online power control problem can be explicitly solved. In particular, we apply the KKT conditions to the optimization problem in~\eqref{eq:def_Cbar},
to obtain the optimal values of $\mathcal{E}_i$
(see Appendix~\ref{sec:kkt_solution}):
\begin{equation}\label{eq:optei}
	\mathcal{E}_i=\begin{cases}
		(\tilde{N}+\bar{B})\frac{p(1-p)^{i-1}}{1-(1-p)^{\tilde{N}}}-1
			&,i=1,\ldots,\tilde{N}\\
		0&,i>\tilde{N}
	\end{cases}
\end{equation}
where $\tilde{N}$ is the smallest positive integer satisfying 
\[ 1>(1-p)^{\tilde{N}}[1+p(\bar{B}+\tilde{N})]. \]
This gives the following expression for $\bar{C}$:
\begin{align*}
	\bar{C}&=
	\frac{1-(1-p)^{\tilde{N}}}{2}\log\left(
		\frac{p(\bar{B}+\tilde{N})}{1-(1-p)^{\tilde{N}}}\right)\\*
	&\qquad{}+\frac{1-p-(1-p)^{\tilde{N}}(1-p+\tilde{N}p)}{2p}\log(1-p).
\end{align*}
Combined with \eqref{eq:bounds}, this is the capacity of the RBR channel within 1.05 bits/channel use.

It was shown in~\cite{OzelUlukus2012} that the capacity of an energy harvesting channel
with infinite battery size is $\frac{1}{2}\log(1+\mathbb{E}[E_t])$.
Clearly, this is an upper bound to the capacity of our channel, and this can be readily obtained 
from the result of Proposition~\ref{prop:bounds}.
Using concavity of the log function in~\eqref{eq:def_Cbar}:
\begin{align*}
	\bar{C}
	&\leq \lim_{N\to\infty}
		\max_{\substack{\{\mathcal{E}_i\}_{i=1}^{N}:\\
						\mathcal{E}_i\geq0\ ,i=1,\ldots,N\\
						\sum_{i=1}^{N}\mathcal{E}_i\leq\bar{B}}}
		\frac{1}{2}\log\left(1+\sum_{i=1}^{N}p(1-p)^{i-1}\mathcal{E}_i\right)\\
	%&= \frac{1}{2}\log\left(1
	%	+\lim_{N\to\infty}
	%		\max_{\substack{\{\mathcal{E}_i\}_{i=1}^{N}:\\
	%					\mathcal{E}_i\geq0\ ,i=1,\ldots,N\\
	%					\sum_{i=1}^{N}\mathcal{E}_i\leq\bar{B}}}
	%	\sum_{i=1}^{N}p(1-p)^{i-1}\mathcal{E}_i\right)\\
	&= \frac{1}{2}\log(1+p\bar{B}),
\end{align*}
where the last step follows because the optimal values for the first line are $\mathcal{E}_1=\bar{B}$ and 
$\mathcal{E}_i=0$ for $i\geq2$.
%\begin{corollary}[Infinite Battery Upper Bound]
%The capacity can be further upper bounded as
%\begin{equation}
%\label{eq:infinite_battery}
%\bar{C}\leq\frac{1}{2}\log(1+p\bar{B}).
%\end{equation}
%\end{corollary}
%\begin{IEEEproof}
%Using concavity of the log function in~\eqref{eq:def_Cbar}:
%\begin{align*}
%	\bar{C}
%	&\leq \lim_{N\to\infty}
%		\max_{\substack{\{\mathcal{E}_i\}_{i=1}^{N}:\\
%						\mathcal{E}_i\geq0\ ,i=1,\ldots,N\\
%						\sum_{i=1}^{N}\mathcal{E}_i\leq\bar{B}}}
%		\frac{1}{2}\log\left(1+\sum_{i=1}^{N}p(1-p)^{i-1}\mathcal{E}_i\right)\\
%	%&= \frac{1}{2}\log\left(1
%	%	+\lim_{N\to\infty}
%	%		\max_{\substack{\{\mathcal{E}_i\}_{i=1}^{N}:\\
%	%					\mathcal{E}_i\geq0\ ,i=1,\ldots,N\\
%	%					\sum_{i=1}^{N}\mathcal{E}_i\leq\bar{B}}}
%	%	\sum_{i=1}^{N}p(1-p)^{i-1}\mathcal{E}_i\right)\\
%	&= \frac{1}{2}\log(1+p\bar{B}),
%\end{align*}
%where the last step is because the optimal values are $\mathcal{E}_1=\bar{B}$ and 
%$\mathcal{E}_i=0$ for $i\geq2$.
%\end{IEEEproof}
\cite{DongOzgur2014} used this upper bound corresponding to infinite battery size to bound the capacity of the energy harvesting channel with Bernoulli energy arrivals. \Fref{fig:bounds} illustrates that the upper bound we provide here is strictly smaller than the infinite battery upper bound. Similarly, our lower bound here is based on the optimal power allocation strategy we characterize in \eqref{eq:optei}, while the lower bound in \cite{DongOzgur2014} is based on a suboptimal power allocation policy. 

\begin{figure}[!t]
\centering
	\includegraphics[width=2.5in]{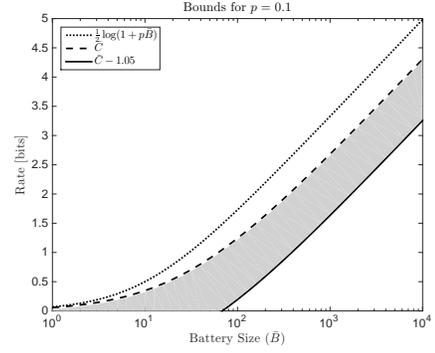}
\caption{Upper and lower bounds for $p=0.1$. The shaded region indicates where the capacity of the energy harvesting channel can lie.}
\label{fig:bounds}
\end{figure}

Similar bounds can be obtained for~\eqref{eq:EH_capacity_noncausal}, which we state in the following proposition.
\begin{proposition}
\label{prop:bounds_noncausal}
The capacity of the RBR channel with noncausal energy arrival information is bounded by:
\begin{equation}
\label{eq:bounds_noncausal}
	\bar{C}_\text{noncausal}
		-\frac{1}{2}\log\left(\frac{\pi e}{2}\right)
	\leq C_\text{RBR,noncausal}
	\leq \bar{C}_\text{noncausal},
\end{equation}
where
\begin{equation}
	\bar{C}_\text{noncausal}
	=\sum_{k=1}^{\infty}p^2(1-p)^{k-1}
		\frac{k}{2}\log(1+\bar{B}/k).
\end{equation}
\end{proposition}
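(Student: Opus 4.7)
The plan is to start from Theorem~\ref{thm:capacity_noncausal}, which already expresses $C_\text{RBR,noncausal}$ as a weighted sum $\sum_{k\geq1}p^2(1-p)^{k-1}C_k$ of per-epoch mutual-information maxima $C_k\triangleq\max_{\|X^k\|^2\leq\bar{B}}I(X^k;X^k+Z^k)$. The entire proposition then reduces to proving the per-epoch two-sided bound
\[
\tfrac{k}{2}\log(1+\bar{B}/k)-\tfrac{k}{2}\log(\pi e/2)\leq C_k\leq\tfrac{k}{2}\log(1+\bar{B}/k)
\]
for each $k\geq1$, because the weighted sum of the $k$-dependent penalty collapses neatly: using the geometric-series identity $\sum_{k\geq1}k(1-p)^{k-1}=1/p^2$, one has $\sum_{k\geq1}p^2(1-p)^{k-1}\cdot\tfrac{k}{2}\log(\pi e/2)=\tfrac12\log(\pi e/2)$, which is exactly the constant in~\eqref{eq:bounds_noncausal}.

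The upper bound on $C_k$ is immediate from the classical vector AWGN result: the peak constraint $\|X^k\|^2\leq\bar{B}$ implies the average-power constraint $\mathbb{E}\|X^k\|^2\leq\bar{B}$, and the $k$-use AWGN capacity under the latter is $\tfrac{k}{2}\log(1+\bar{B}/k)$, attained by isotropic Gaussians of variance $\bar{B}/k$. For the lower bound I would use a product input with independent coordinates $X_i$ each uniformly distributed on $[-\sqrt{\bar{B}/k},\sqrt{\bar{B}/k}]$; this satisfies the peak constraint surely and decouples the mutual information as $I(X^k;Y^k)=\sum_{i=1}^k I(X_i;Y_i)$. For each coordinate, Shannon's entropy power inequality applied to $h(X_i)=\tfrac12\log(4\bar{B}/k)$ and $h(Z_i)=\tfrac12\log(2\pi e)$ yields $I(X_i;Y_i)\geq\tfrac12\log\bigl(1+\tfrac{2\bar{B}}{k\pi e}\bigr)$, and a one-line manipulation using $\pi e>2$ rewrites this as $I(X_i;Y_i)\geq\tfrac12\log(1+\bar{B}/k)-\tfrac12\log(\pi e/2)$; summing over the $k$ coordinates produces the desired per-epoch lower bound.

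I do not anticipate a serious obstacle: the argument essentially reproduces the standard Shannon lower bound for amplitude-limited AWGN and closely mirrors the proof of Proposition~\ref{prop:bounds} for the causal case. The only mild subtlety is observing that the $k$-dependent per-epoch gap $\tfrac{k}{2}\log(\pi e/2)$ collapses into the single constant $\tfrac12\log(\pi e/2)$ after weighting by $p^2(1-p)^{k-1}$, thanks to the identity above; once that observation is in place, the remainder is routine bookkeeping with the EPI and the geometric weights.
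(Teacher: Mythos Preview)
Your proposal is correct and follows essentially the same route as the paper: relax the peak constraint to an average constraint for the upper bound, and for the lower bound use an i.i.d.\ product input with per-coordinate amplitude budget $\bar{B}/k$, invoke the standard amplitude-constrained AWGN lower bound (the paper phrases this via $C_\text{Smith}$ and~\eqref{eq:smith_lower_bound}, you derive it directly via EPI with a uniform input), and then collapse the $k$-weighted penalty using $\sum_{k\geq1}k\,p^2(1-p)^{k-1}=1$. The only difference is presentational: you spell out the EPI step and the geometric-series identity explicitly, whereas the paper absorbs both into a one-line reference to~\eqref{eq:smith_lower_bound}.
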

\begin{proof}
See Appendix~\ref{sec:bounds_noncausal}.
\end{proof}
In Appendix~\ref{sec:bounds_noncausal}, we consider another lower bound for the capacity with noncausal energy arrival information at the transmitter, which is tighter than the one in \eqref{eq:bounds_noncausal}, and plot it in Figure~\ref{fig:noncausal_causal} together with the upper bound in~\eqref{eq:bounds} on the capacity with causal energy arrival information.
%The lower bound for the capacity with noncausal energy arrival information at the transmitter in \eqref{eq:bounds_noncausal} can be further improved. We plot this improved lower bound in \Fref{fig:noncausal_causal} together with the upper bound in~\eqref{eq:bounds} on the capacity with causal energy arrival information. 
It is clear from the graph that for some values of $\bar{B}$, the noncausal capacity is strictly greater than the causal capacity, further illustrating the observation we state in Proposition~\ref{prop:noncausal_strictly_greater}. %\footnote{In particular, \cite{Ozeletal2014} claims that the capacity of an energy harvesting channel (with energy arrival information at the receiver) is the same with causal and noncausal energy arrival information at the transmitter. Our result provides a counter-example.}
%In particular, this disproves the claim made in~\cite{Ozeletal2014}, where it was stated that the two are equal.

\begin{figure}[!t]
\centering
	\includegraphics[width=2.5in]{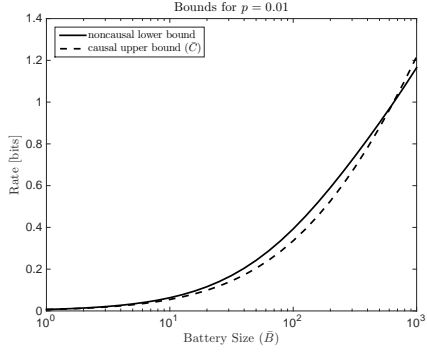}
\caption{Noncausal lower bound and causal upper bound for $p=0.01$. Noncausal capacity is strictly greater than causal capacity for some values of $\bar{B}$.}
\label{fig:noncausal_causal}
\end{figure}

%---------------------------------------------------%
\section{Capacity of the RBR Channel: Proof of Theorem~\ref{thm:equivalence}}
%---------------------------------------------------%
\label{sec:proof_mainthm}
Recall Definition~\ref{def:clipping} of the $(N)$-clipping channel. This is a memoryless time-invariant channel with i.i.d. states known to the receiver. Now, assume there was a feedback link from the receiver to the transmitter that could feed back the channel state information to the transmitter in a \emph{strictly causal} fashion. Since feedback cannot increase the capacity of a memoryless channel, the capacity of this new clipping channel with the state information $L_i$ available strictly causally at the transmitter is the same as that of our original clipping channel in Definition~\ref{def:clipping}. We will consider this new clipping channel with strictly causal state information in the sequel and modify the definition of the encoding functions in~\eqref{eq:clp_encoding} to depend on the past $L_i$:
\begin{equation}
	\tilde{f}^{(N)}_i:\mathcal{M}\times\mathcal{L}^{i-1}
	\to\tilde{\mathcal{X}}^{(N)},
	\qquad i=1,\ldots,n,
\label{eq:clp_encoding_SI}
\end{equation}
such that the transmitted symbol at time $i$ is $\tilde{X}^{(N)}_i=\tilde{f}^{(N)}_i(w,L^{i-1})$. With this modification, we show that the clipping channel is equivalent to the RBR channel by showing first that $p\cdot C_\text{clp}\leq C_\text{RBR}$ and then $p\cdot C_\text{clp}\geq C_\text{RBR}$.
%We provide here the proof for the first part, whereas the proof for the latter may be found in Appendix~\ref{subsec:clp_scheme}.
%---------------------------------------------------%
\subsection{$p\cdot C_\text{clp}\leq C_\text{RBR}$}
%---------------------------------------------------%
\label{subsec:EH_scheme}
We will show that if rate $R$ is achievable for the $(N)$-clipping channel for some $N$, then rate $pR$ is achievable over the RBR channel. Fix $\varepsilon>0$.
Let $R< C_\text{clp}^{(N)}$ be an achievable rate for the $(N)$-clipping channel.
Then there exists $n_0$ such that for every $n> n_0$ there exists an $(M,n)$ code with probability of error $P_e^{\text{clp},(N)}<p\varepsilon/2$ and $\frac{\log M}{n}\geq R-\varepsilon$.

Fix such $n>n_0$.
Consider the following $(M,n')$ code for the RBR channel.
At time $t$, $1\leq t\leq n'$, the transmitter has causal knowledge of the energy arrivals $E^t$.
Let $K_t$ denote the number of energy arrivals up to time $t$,
and denote by $\{T_i\}_{i=0}^{K_t-1}$ the energy arrival times,
i.e. the times for which $E_t=\bar{B}$.
Note that $K_t\in\{1,\ldots,t\}$ is a random quantity and is a function of $E^t$.
Since we assume $E_1=\bar{B}$ (cf. Section~\ref{sec:definitions}), we always have $T_0=1$.
For each $i=1,\ldots,{K_t-1}$, define $L_i=T_i-T_{i-1}$.
As $n'\to\infty$, these $L_i$'s will behave as i.i.d. geometric RVs with parameter $p$.
In order to communicate message $w$, upon observing $E^t$, the transmitter sends the following symbol at time $t$:
\[
	X_t(w,E^t)=\begin{cases}
		\tilde{X}^{(N)}_{K_t,t-T_{K_t-1}+1}(w,L^{K_t-1})\\
			\qquad\qquad\qquad\qquad,t-T_{K_t-1}+1\leq N\\
		0\\ \qquad\qquad\qquad\qquad,t-T_{K_t-1}+1>N
	\end{cases}
\]
where $\tilde{X}^{(N)}_{i,j}(w,L^{i-1})$ is the $j$-th element of the $i$-th symbol in the $(N)$-clipping channel code $\tilde{f}_i^{(N)}$, defined in~\eqref{eq:clp_encoding_SI}.
This codeword will satisfy the energy constraint~\eqref{eq:EH_constraint}, since for each~$t$,
\[
	B_t=\bar{B}-\sum_{j=1}^{t-T_{K_t-1}}|\tilde{X}^{(N)}_{K_t,j}|^2
	\geq|\tilde{X}^{(N)}_{K_t,t-T_{K_t-1}+1}|^2,
\]
because 
$\sum_{j=1}^{t-T_{K_t-1}+1}|\tilde{X}^{(N)}_{K_t,j}|^2\leq\sum_{j=1}^{N}|\tilde{X}^{(N)}_{K_t,j}|^2\leq\bar{B};$ 
a clipping channel code must satisfy the energy constraint~\eqref{eq:clp_constraint}.

At the end of time slot $n'$, the receiver observes $Y^{n'}$ and $E^{n'}$, 
and forms $\{T_i\}_{i=0}^{K_{n'}-1}$ and $\{L_i\}_{i=1}^{K_{n'}-1}$.
Define also $L_{K_{n'}}\triangleq {n'}+1-T_{K_{n'}-1}$.
Define for $i=1,\ldots,K_{n'}$:
\[
	\tilde{Y}^{(N)}_{ij}=
	\begin{cases}
		Y_{T_{i-1}+j-1}&,j\leq L_i\\
		0&,j>L_i
	\end{cases}
\]
if $L_i\leq N$, and $\tilde{Y}^{(N)}_{i}=0$ if $L_i>N$.
Similarly, set $\tilde{Y}^{(N)}_{i}=0$ for $i=K_{n'}+1,\ldots,n$ (we will see shortly that the probability that $K_{n'}<n$ should vanish).
The receiver then sets $\hat{W}=\hat{W}((\tilde{Y}^{(N)})^n,L^n)$, where $\hat{W}=\tilde{g}^{(N)}$, the decoding function of the clipping channel code defined in~\eqref{eq:clp_decoding}.

To analyze the probability of error, we separate into two cases: $K_{n'}<n$ and $K_{n'}\geq n$. When $n'$ is chosen appropriately as a function of $n$, $K_{n'}<n$ should occur with very small probability, and otherwise when $K_{n'}\geq n$, intuitively we have been able to transmit all symbols of the $(N)$-clipping channel codeword of length $n$ and the channel we induce from $(\tilde{X}^{(N)})^n$ to $(\tilde{Y}^{(N)})^n$ is similar to the $(N)$-clipping channel, therefore the decoder should give vanishing probability of error. Formally,
\begin{align}
	P_e^{RBR}&=\Pr(\hat{W}\neq W\ \cap\ K_{n'}\geq n)\nonumber\\*
		&\qquad+\Pr(\hat{W}\neq W\ \cap\ K_{n'}<n)\nonumber\\
	&\leq \Pr(\hat{W}\neq W\ \cap\ K_{n'}\geq n)+\Pr(K_{n'}<n).\label{eq:EH_Pe}
\end{align}

We start with the first term.
Consider an $(M,n)$ \mbox{$(N)$-clipping} channel code.
For a given message $w$ and clipping length realizations $l^n$, the input $(\tilde{x}^{(N)})^n$ is uniquely determined and  therefore also the transition probability $p((\tilde{y}^{(N)})^n|(\tilde{x}^{(N)})^n,l^n)$. Denote $P_\text{clp}^{(N)}(\mathcal{E}|w,l^n)$ as the probability of error.
The total probability of error for the code, over the $(N)$-clipping channel, is then
\[
P_e^{\text{clp},(N)}=\frac{1}{M}\sum_{w=1}^{M}\sum_{l^n}P(l^n)P_\text{clp}^{(N)}(\mathcal{E}|w,l^n),
\]
where $P(l^n)=\prod_{i=1}^{n}P(l_i)$ and $P(l_i)=p(1-p)^{l_i-1}$.
We choose $n$ large enough so that $P_e^{\text{clp},(N)}\leq p\varepsilon/2$.

Now going back to our RBR channel, consider a realization of the energy arrival process $e^{n'}$, and denote $P_\text{RBR}(\mathcal{E}|w,e^{n'})$ as the probability of error given this realization and a message $w$, using the scheme described above.
Then we can write the first term in~\eqref{eq:EH_Pe} as:
\begin{align*}
	\lefteqn{\Pr(\hat{W}\neq W\ \cap\ K_{n'}\geq n)}\\*
	&\qquad=\frac{1}{M}\sum_{w=1}^{M}
	\sum_{\substack{e^{n'}: k_{n'}\geq n}}P(e^{n'})P_\text{RBR}(\mathcal{E}|w,e^{n'}),
\end{align*}
where $k_{n'}$ denotes the number of energy arrivals in $e^{n'}$, and $P(e^{n'}) = p^{k_{n'}-1}(1-p)^{n'-k_{n'}}$ (recall that $e_1$ is fixed),
and $P_{\text{RBR}}(\mathcal{E}|w,e^{n'})$ is the probability of error for the RBR channel using the scheme described above.

For a sequence $e^{n'}$ with at least $n$ energy arrivals, 
we can define $L^n(e^{n'})=l^n$ as the sequence of lengths of the first $n$ epochs, i.e. the times between consecutive energy arrivals.
If there are exactly $n$ energy arrivals, so that the last epoch is undefined, we define it to be $l_n\triangleq n'-\sum_{i=1}^{n-1}l_i$, i.e. it lasts until then end of the sequence $e^{n'}$.
Then we have $P_\text{RBR}(\mathcal{E}|w,e^{n'})=P^{(N)}_\text{clp}(\mathcal{E}|w,L^n(e^{n'}))$,
and denoting $\lambda\triangleq\sum_{i=1}^{n}l_i$, we have
\[
	P(e^{n'})=P(e^{\lambda})P(e_{\lambda+1}^{n'})\\
	=\frac{1}{p}P(l^n)P(e_{\lambda+1}^{n'}).
\]
If $k_{n'}=n$, then it is understood that ${P(e_{\lambda+1}^{n'})=P(e_{n'+1}^{n'})=1}$.
We can write the above sum as
\begin{align*}
	\lefteqn{\sum_{\substack{e^{n'}: k_{n'}\geq n}}P(e^{n'})P_\text{RBR}(\mathcal{E}|w,e^{n'})}\\*
	&\qquad=\sum_{\substack{l^n:\\ \lambda\leq {n'}}}
		\sum_{\substack{e^{n'}:\\ L^n(e^{n'})=l^n}}P(e^{n'})P_\text{RBR}(\mathcal{E}|w,e^{n'})\\
	&\qquad=\sum_{\substack{l^n:\\ \lambda\leq {n'}}}
		\frac{1}{p}P(l^n)P^{(N)}_\text{clp}(\mathcal{E}|w,l^n)\sum_{\substack{e^{n'}:\\ L^n(e^{n'})=l^n}}P(e^{n'}_{\lambda+1})\\
	&\qquad=\sum_{\substack{l^n:\\ \lambda\leq {n'}}}
		\frac{1}{p}P(l^n)P^{(N)}_\text{clp}(\mathcal{E}|w,l^n)
		%\underbrace{
			\sum_{e^{n'}_{\lambda+1}}P(e^{n'}_{\lambda+1})
		%}_{1}
		\\
	&\qquad\leq\sum_{l^n}\frac{1}{p}P(l^n)P^{(N)}_\text{clp}(\mathcal{E}|w,l^n),
\end{align*}
which gives
$\Pr(\hat{W}\neq W\ \cap\ K_{n'}\geq n)\leq \frac{1}{p}P_e^{\text{clp},(N)}\leq\varepsilon/2$.

The second term in~\eqref{eq:EH_Pe} can be bounded using the law of large numbers.
Since $K_{n'}=1+\sum_{i=1}^{{n'}-1}Z_i$ with $Z_i\sim\text{Bernoulli}(p)$,
\[
	\Pr(K_{n'}<n)=\Pr\left(\frac{1}{{n'}-1}\sum_{i=1}^{{n'}-1}Z_i
	<\frac{n-1}{{n'}-1}\right)\leq\frac{\varepsilon}{2},
\]
for ${n'}$ large enough if $\frac{n-1}{{n'}-1}<p$, say $\frac{n-1}{{n'}-1}=p(1-\varepsilon)$.
${n'}$ can be chosen large enough so that also $n\geq n_0$.
The rate of this code is
\[
	\frac{\log M}{n'}
	\geq p(1-\varepsilon)\frac{\log M}{n}
	\geq p(1-\varepsilon)(R-\varepsilon)
	=pR-\epsilon^\prime,
\]
where $\varepsilon^\prime=p\varepsilon(1+R-\varepsilon)$.
We showed that for ${n'}$ large enough we can obtain $P_e^{\text{RBR}}\leq\varepsilon$ with rate close to
$pR$, i.e. $pR$ is achievable for the RBR channel,
which implies ${p\cdot C^{(N)}_\text{clp}\leq C_\text{RBR}}$.
Since we showed this for arbitrary $N$, we conclude that $p\cdot C_\text{clp}=p\cdot\sup_{N\geq1}C_\text{clp}^{(N)}\leq C_\text{RBR}$.\qed

%As mentioned earlier, the proof that $p\cdot C_\text{clp}\geq C_\text{EH}$ is omitted.

%---------------------------------------------------%
\subsection{$C_\text{RBR}\leq p\cdot C_\text{clp}$}
%---------------------------------------------------%
\label{subsec:clp_scheme}
%This is the second part of the proof of Theorem~\ref{thm:equivalence}.
We show that any achievable rate for the RBR channel is $\varepsilon$-achievable on the $(N)$-clipping channel for some $N$ large enough.
Fix $\varepsilon>0$.
Let $R< C_\text{RBR}$ be an achievable rate for the RBR channel.
Then there exists $n'_0$ such that for all integers $n'>n'_0$ there exists an $(M,n')$ code with $P_e^{\text{RBR}}<\varepsilon/2$ and $\frac{\log M}{n'}\geq R-\varepsilon$.

Fix $n'>n'_0$ and $N$.
We suggest the following $(M,n)$ coding scheme for the $(N)$-clipping channel:
consider time $i$, $1\leq i\leq n$.
For a given sequence of past clipping lengths $L^{i-1}$,
denote the arrival times ${T}_{j}=1+\sum_{k=1}^{j}L_k$ for $j=0,1,\ldots,i-1$.
Define the sequence $E^{(i),n'}=({E}_1^{(i)},\ldots,{E}_{n'}^{(i)})$ as follows
(this sequence is different for different $i$'s):
\begin{equation}
	{E}_\ell^{(i)}=
	\begin{cases}
		\bar{B}&,\ell=T_0,{T}_1,{T}_2,\ldots,{T}_{i-1}\\
		0&,\text{otherwise}
	\end{cases}
	\label{eq:def_E(L)}
\end{equation}
Note that $E_\ell^{(i)}=0$ for all $\ell>T_{i-1}$.
It is possible for some values of ${T}_{j}$ to be larger than $n'$ -- these values are simply ignored.
At time $i$, the transmitter sends the codeword $\tilde{X}_i^{(N)}$, whose $j$-th component, $1\leq j\leq N$, is defined as
\[
	\tilde{X}^{(N)}_{ij}(w,L^{i-1})=
	\begin{cases}
		X_{{T}_{i-1}+j-1}(w,{E}^{(i),T_{i-1}+j-1})\\
			\qquad\qquad\qquad,{T}_{i-1}+j-1\leq n'\\
		0\\
			\qquad\qquad\qquad,{T}_{i-1}+j-1>n'
	\end{cases}
\]
where $X_t=f_t$, the encoding functions of the
RBR channel code, defined in~\eqref{eq:EH_encoding}.
Since $E^{(i)}_{T_{i-1}}=\bar{B}$ and $E^{(i)}_\ell=0$ for $\ell>T_{i-1}$,
by the definition of the RBR channel energy constraints~\eqref{eq:EH_constraint} and~\eqref{eq:EH_battery}, it is evident that $\|\tilde{X}^{(N)}_i(w,L^{i-1})\|^2\leq\bar{B}$.

The receiver observes $\tilde{Y}^{(N)}_i$ and $L_i$, and forms a concatenation of the non-clipped outputs.
More precisely, if $L_i\leq N$, set
$Y_{{T}_{i-1}+j-1}=\tilde{Y}_{ij}^{(N)}$ for $j=1,\ldots,L_i$.
If $L_i>N$, the receiver declares an error for the entire transmission.
At the end of time $n$, the receiver decides on message
$\hat{W}=\hat{W}(Y^{n'},{E}^{(n+1),n'})$,
where $\hat{W}=g$, the decoding function of the RBR channel code, defined in~\eqref{eq:EH_decoding}.

The probability of error analysis here follows the same lines of Section~\ref{subsec:EH_scheme}.
Denote the event
\begin{equation}
	E_0=\{T_n\leq n'\}\cup\bigcup_{i=1}^{n}\{L_i>N\}.
\end{equation}
We separate between the case $E_0^c$, for which we can show that the decoder gives vanishing probability of error, and the case $E_0$, the probability of which can be upper bounded by $\varepsilon/2$.
%We separate between the case $T_n>N$, for which we can show that the decoder gives vanishing probability of error, and the case $T_n\leq N$, whose probability can be upper bounded by $\varepsilon/2$ using the law of large numbers.

We can bound the probability of error for the $(N)$-clipping channel as follows:
\begin{align}
	%P_e^{\text{clp}}&=\Pr(\hat{W}\neq W\ \cap\ T_n>N)\nonumber\\
	%&\qquad{}+\Pr(\hat{W}\neq W\ \cap\ T_n\leq N)\nonumber\\
	P_e^{\text{clp},(N)}&=\Pr(\hat{W}\neq W\cap E_0^c)
		+\Pr(\hat{W}\neq W\cap E_0)\nonumber\\
	&\leq\Pr(\hat{W}\neq W\ \cap\ E_0^c)+\Pr(E_0).\label{eq:clp_Pe}
\end{align}

To bound the first term, consider an $(M,n')$ RBR channel code.
For a given message $w$ and energy arrival process realization $e^{n'}$, the input $x^{n'}$ is uniquely determined and therefore also the transition probability $p(y^{n'}|x^{n'},e^{n'})$.
Denote $P_\text{RBR}(\mathcal{E}|w,e^{n'})$ as the probability of error, so that the total probability of error for the RBR channel is 
\[
	P_e^\text{RBR}=\frac{1}{M}\sum_{w=1}^{M}\sum_{e^{n'}}
		P(e^{n'})P_\text{RBR}(\mathcal{E}|w,e^{n'}),
\]
where $P(e^{n'})$ is the probability of the energy arrival process, as described in Section~\ref{subsec:EH_scheme}.
Choose ${n'}$ large enough so that $P_e^\text{RBR}\leq\varepsilon/2$.

The first term in~\eqref{eq:clp_Pe} can be written as
\begin{align*}
	%\lefteqn{
	\Pr(\hat{W}\neq W\cap E_0^c)
	%}\\&\qquad
	=\frac{1}{M}\sum_{w=1}^{M}
	\sum_{\substack{l^n:\\ \sum_{i=1}^{n}l_i\geq n'\\ l_i\leq N\ \forall i}}
	P(l^n)P_\text{clp}^{(N)}(\mathcal{E}|w,l^n),
\end{align*}
where $P(l^n)$ is the probability of $n$ independent clipping lengths as in Section~\ref{subsec:EH_scheme},
and $P_\text{clp}^{(N)}(\mathcal{E}|w,l^n)$ is the probability of error for the clipping channel using the scheme described above.

For any given $l^n$, there is a corresponding sequence of length $\lambda=\sum_{i=1}^n{l_i}$, $e^{\lambda}\in\{0,\bar{B}\}^{\lambda}$, as defined by~\eqref{eq:def_E(L)}.
For $l^n$ such that $\sum_{i=1}^{n}l_i\geq n'$ and $l_i\leq N$, $i=1,\ldots,n$,
denote by $E^{n'}(l^n)$ the first $n'$ elements of the corresponding $e^{\lambda}$.
Then 
\[
	P_\text{clp}^{(N)}(\mathcal{E}|w,l^n)
	=P_\text{RBR}(\mathcal{E}|w,E^{n'}(l^n)).
\]
Note also that $E^{n'}(l^n)$ can have at most $n$ non-zero elements (i.e. energy arrivals).

For a fixed $w$, we write
\begin{align}
	\lefteqn{\sum_{\substack{l^n:\\ \sum_{i=1}^{n}l_i\geq n'\\ l_i\leq N\ \forall i}}P(l^n)
		P_\text{clp}^{(N)}(\mathcal{E}|w,l^n)}\nonumber\\*
	&\qquad=\sum_{\substack{e^{n'}:\\ \text{at most $n$ $\bar{B}$'s}}}
		\sum_{\substack{l^n:\\ \sum_{i=1}^{n}l_i\geq n'\\ l_i\leq N\ \forall i\\E^{n'}(l^n)=e^{n'}}}
		P(l^n)P_\text{RBR}(\mathcal{E}|w,e^{n'})\nonumber\\
	&\qquad=\sum_{\substack{e^{n'}:\\ \text{at most $n$ $\bar{B}$'s}}}
		P_\text{RBR}(\mathcal{E}|w,e^{n'})
		\sum_{\substack{l^n:\\ \sum_{i=1}^{n}l_i\geq n'\\ l_i\leq N\ \forall i\\E^{n'}(l^n)=e^{n'}}}
		P(l^n)\nonumber\\
	&\qquad\leq\sum_{\substack{e^{n'}:\\ \text{at most $n$ $\bar{B}$'s}}}
		P_\text{RBR}(\mathcal{E}|w,e^{n'})
		\sum_{\substack{l^n:\\ \sum_{i=1}^{n}l_i\geq n'\\E^{n'}(l^n)=e^{n'}}}
		P(l^n).
		\label{eq:sumln}
\end{align}
Now, observe that a sequence $e^{n'}$ with exactly $i$ energy arrivals
(with $e_1$ being the first energy arrival by assumption)
can be described by $i-1$ epoch lengths $\{\tilde{l}_j\}_{j=1}^{i-1}$, 
where $\tilde{l}_j$ is the time between the $j$-th and $(j+1)$-th energy arrivals.
The last epoch length is always $\tilde{l}_i=n'-\sum_{j=1}^{i-1}\tilde{l}_j$.
Then, a sequence $l^n$ satisfies $E^{n'}(l^n)=e^{n'}$ if and only if
\begin{align*}
	l_j&=\tilde{l}_j,\qquad j=1,\ldots,i-1,\\
	l_i&\geq\tilde{l}_i.
\end{align*}
For example, say $e^{9}=100010100$. 
Then any sequence $l^n$ for which $l_1=4$, $l_2=2$, and $l_3\geq3$ will satisfy $E^{9}(l^n)=e^9$.

Fix $e^{n'}$ with $i$ energy arrivals. Then, since $e_1=\bar{B}$ w.p. 1,
\[
	P(e^{n'})=p^{i-1}(1-p)^{n'-i}
	=\frac{1}{p}\prod_{j=1}^{i}p(1-p)^{\tilde{l}_j-1}
	=\frac{1}{p}\prod_{j=1}^iP(\tilde{l}_j),
\]
where $P(\tilde{l}_j)$ is the probability of a geometric RV with parameter $p$.
We can write the sum in~\eqref{eq:sumln} as
\begin{align*}
	\sum_{\substack{l^n:\\ \sum_{i=1}^{n}l_i\geq n'\\ E^{n'}(l^n)=e^{n'}}}P(l^n)
	&=\sum_{\substack{l^n:\\ l^{i-1}=\tilde{l}^{i-1}\\ l_i\geq\tilde{l}_i}}P(l^n)\\
	&=\prod_{j=1}^{i-1}P(\tilde{l}_j)\Pr(L\geq\tilde{l}_i)\\
	&=P(e^{n'}),
\end{align*}
where the last step is because for a geometric RV,
\[\Pr(L\geq \ell)=\sum_{k=\ell}^{\infty}p(1-p)^{k-1}=(1-p)^{\ell-1}=\frac{1}{p}\Pr(L=\ell).\]

Plugging this back into the original expression for the error probability:
\begin{align*}
	\lefteqn{\Pr(\hat{W}\neq W\ \cap\ E_0^c)}\\
	&\qquad\leq\frac{1}{M}\sum_{w=1}^{M}
		\sum_{\substack{e^{n'}:\\ \text{at most $n$ $\bar{B}$'s}}}
		P_\text{RBR}(\mathcal{E}|w,e^{n'})
		P(e^{n'})\\
	&\qquad\leq
		\frac{1}{M}\sum_{w=1}^{M}
		\sum_{e^{n'}}
		P_\text{RBR}(\mathcal{E}|w,e^{n'})
		P(e^{n'})\\
	&\qquad=
		P_e^{\text{RBR}}\\
	&\qquad\leq\varepsilon/2.
\end{align*}

The second term in~\eqref{eq:clp_Pe} can be upper bounded by applying the union bound and the law of large numbers:
\begin{align}
	\Pr(E_0)&\leq\Pr(T_n-1<n')+\sum_{i=1}^{n}\Pr(L_i>N)\nonumber\\
	&=\Pr\left(\frac{1}{n}\sum_{i=1}^{n}{L_i}<\frac{n'}{n}\right)+n(1-p)^N\nonumber\\
	&\leq\frac{\varepsilon}{4}+\frac{\varepsilon}{4}
	=\frac{\varepsilon}{2},
\end{align}
for $n$ large enough if $n'/n<\mathbb{E}L=\frac{1}{p}$,
say $\frac{n'}{n}=\frac{1-\varepsilon}{p}$,
and $N>\frac{\log(\varepsilon/4n)}{\log(1-p)}=\frac{\log(\varepsilon(1-\varepsilon)/4pn')}{\log(1-p)}$.
We get that there exists $N$ for which $P_e^{\text{clp},(N)}\leq\varepsilon$ for $n$ sufficiently large.
The rate of this $(M,n)$ code is
\[
	\frac{\log M}{n}
	=\frac{1-\varepsilon}{p}\frac{\log M}{n'}
	\geq\frac{1-\varepsilon}{p}(R-\varepsilon)
	=\frac{R}{p}-\varepsilon^\prime,
\]
where $\varepsilon^\prime=\varepsilon\frac{1+R-\varepsilon}{p}$.
Next, we have from Fano's inequality:
\begin{align*}
	\frac{R}{p}-\varepsilon'
	&\leq\frac{\log M}{n}\\
	&\leq\frac{1}{n}\frac{1}{1-\varepsilon}[I((\tilde{X}^{(N)})^n;(\tilde{Y}^{(N)})^n|L^n)+h_2(\varepsilon)]\\
		&\leq\frac{1}{1-\varepsilon}\left[C_\text{clp}^{(N)}+\frac{h_2(\varepsilon)}{n}\right]\\
		&\leq\frac{1}{1-\varepsilon}\left[C_\text{clp}+\frac{h_2(\varepsilon)}{n}\right].
\end{align*}
Taking $\varepsilon\to0$, and observing that this applies to any ${R<C_\text{RBR}}$, gives $C_\text{RBR}/p\leq C_\text{clp}$.\qed

%---------------------------%
\section{Noncausal Side Information Strictly Increases Capacity: Proof of Proposition~\ref{prop:noncausal_strictly_greater}}
%---------------------------%
\label{sec:noncausal_strictly_greater}

We would like to show that $C_\text{RBR}<C_\text{RBR,noncausal}$, where $C_\text{RBR}$ and $C_\text{RBR,noncausal}$ are given by~\eqref{eq:EH_capacity} and~\eqref{eq:EH_capacity_noncausal} respectively.
%It is shown in Appendix~\ref{sec:sup_equals_max} that the supremum in the above expressions can be replaced with maximum, giving:
%\begin{align}
%	C_\text{RBR}&=\lim_{n\to\infty}
%		\max_{\substack{p(x^n):\\ 
%		\|X^n\|^2\leq\bar{B}}}
%		\sum_{k=1}^{n}p^2(1-p)^{k-1}I(X^k;Y^k),
%		\label{eq:Bernoulli_causal_max}\\
%	C_\text{RBR,noncausal}&=\lim_{n\to\infty}
%		\sum_{k=1}^{n}p^2(1-p)^{k-1}
%		\max_{\substack{p(x^k):\\ \|X^k\|^2\leq\bar{B}}}
%		I(X^k;Y^k).
%		\label{eq:Bernoulli_noncausal_max}
%\end{align}

Let $g_n(x^n)$ and $f_k(x^k)$ be the maximizing distributions in~\eqref{eq:EH_capacity} and~\eqref{eq:EH_capacity_noncausal} respectively, that is:
\begin{align}
	g_n(x^n)&=\argmax_{p(x^n):\ \|X^n\|^2\leq\bar{B}}
		\left\{\sum_{k=1}^{n}(1-p)^{k-1}
		I(X^k;Y^k)\right\},\label{eq:g_definition}\\
	f_k(x^k)&=\argmax_{p(x^k):\ \|X^k\|^2\leq\bar{B}}
		I(X^k;Y^k).
\end{align}
The maximizing distributions $f_k(x^k)$ are unique and are found explicitly in \cite{Smith1971,ShamaiBarDavid1995,
ChanHranilovicKschischang2005}.
For any $1\leq l\leq k$, let $f_k(x^l)$ denote the marginal distribution of $f_k(x^k)$, that is 
\[f_k(x^l)=\int f_k(x^k)dx_{l+1}^{k},\]
 and similarly for $g_n(x^l)$.
Denote $I(p(X^k))$ as the mutual information $I(X^k;X^k+Z^k)$ computed with input distribution $p(x^k)$.

Using this notation, we rewrite~~\eqref{eq:EH_capacity} and~\eqref{eq:EH_capacity_noncausal} as follows:
\begin{align*}
	C_\text{RBR}&=\lim_{n\to\infty}
		\sum_{k=1}^{n}p^2(1-p)^{k-1}
		I(g_n(X^k)),\\
	C_\text{RBR,noncausal}&=\lim_{n\to\infty}
		\sum_{k=1}^{n}p^2(1-p)^{k-1}
		I(f_k(X^k)).
\end{align*}
Obviously $C_\text{RBR}\leq C_\text{RBR,noncausal}$.
Suppose that also $C_\text{RBR}= C_\text{RBR,noncausal}$.
This will imply
\[
	\lim_{n\to\infty}\sum_{k=1}^{n}(1-p)^{k-1}
	[I(f_k(X^k))-I(g_n(X^k))]=0.
\]
Since each term in the sum is non-negative, we get in particular that the sum of the first two elements must vanish, or
\begin{align*}
	&\lim_{n\to\infty}\{I(g_n(X_1))+(1-p)I(g_n(X^2))\}\\
	&\hspace{2cm}=I(f_1(X_1))+(1-p)I(f_2(X^2)).
\end{align*}
Next, consider $g_2(x^2)$ as defined in~\eqref{eq:g_definition}.
Since $\|X^n\|^2\leq\bar{B}$ implies $\|X^2\|^2\leq\bar{B}$, we get for every $n\geq2$:
\begin{align}
	&I(g_n(X_1))+(1-p)I(g_n(X^2))\nonumber\\*
	&\qquad\leq\max_{p(x^2):\ \|X^2\|^2\leq\bar{B}}
		\big\{I(X_1;Y_1)+(1-p)I(X^2;Y^2)\big\}\nonumber\\
	&\qquad=
		I(g_2(X_1))+(1-p)I(g_2(X^2)),\nonumber\\
	\intertext{which implies}
	&I(f_1(X_1))+(1-p)I(f_2(X^2))\nonumber\\*
	&\qquad\leq
		I(g_2(X_1))+(1-p)I(g_2(X^2)).
	\label{eq:first2_ub}
\end{align}

Next, since $\|X^2\|^2\leq\bar{B}$ implies $|X_1|^2\leq\bar{B}$, we get that $I(g_2(X_1))\leq I(f_1(X_1))$.
Substituting in~\eqref{eq:first2_ub}, we get
$I(f_2(X^2))\leq I(g_2(X^2))$.
Since $f_2(x^2)$ is the unique maximizer of $I(X^2;Y^2)$, this implies $f_2(x^2)=g_2(x^2)$.
Substituting this back in~\eqref{eq:first2_ub}, we get $I(f_1(X_1))\leq I(g_2(X_1))$.
Again, from uniqueness, this implies $f_1(x)=g_2(x)$.
Together, we see that
\[
	f_1(x_1)=g_2(x_1)
	=\int g_2(x_1,x_2)dx_2
	=\int f_2(x_1,x_2)dx_2,
\]
which is a contradiction, since $f_1(x_1)$ is discrete \cite{Smith1971}, whereas $f_2(x_1,x_2)$ has discrete amplitude and uniform phase \cite{ShamaiBarDavid1995}.
Therefore we must have $C_\text{RBR}< C_\text{RBR,noncausal}$.

%---------------------------------------------------%
\section{Capacity Bounds: Proof of Proposition~\ref{prop:bounds}}
%---------------------------------------------------%
\label{sec:bounds}

In this section we will develop the upper and lower bounds to $C_{\text{RBR}}$, as given in Proposition~\ref{prop:bounds}.

%---------------------------------------------------%
\subsection{Upper Bound}
\label{subsec:upper_bound}

We can relax the energy constraint in~\eqref{eq:EH_capacity} to be only in expectation,
thus giving an upper bound:
\begin{align*}
	C_\text{RBR}
	&\leq \lim_{N\to\infty} \max_{\substack{p(x^N):\\ \mathbb{E}\|X^N\|^2\leq\bar{B}}}
		\sum_{k=1}^{N}p^2(1-p)^{k-1}I(X^k;X^k+Z^k)\\
	&\leq \lim_{N\to\infty}\max_{\substack{p(x^N):\\ \mathbb{E}\|X^N\|^2\leq\bar{B}}}
		\sum_{k=1}^{N}p^2(1-p)^{k-1}\sum_{i=1}^{k}I(X_i;X_i+Z_i)\\
	&= \lim_{N\to\infty}\max_{\substack{p(x^N):\\ \mathbb{E}\|X^N\|^2\leq\bar{B}}}\\*
		&\qquad\sum_{i=1}^{N}p(1-p)^{i-1} 
		[1-(1-p)^{N-i+1}]
		I(X_i;X_i+Z_i)\\
	&\leq \lim_{N\to\infty}\max_{\substack{p(x^N):\\ \mathbb{E}\|X^N\|^2\leq\bar{B}}}
		\sum_{i=1}^{N}p(1-p)^{i-1} 
		I(X_i;X_i+Z_i)\\
	&= \lim_{N\to\infty} \max_{\substack{\{\mathcal{E}_i\}_{i=1}^{N}:\\
					\mathcal{E}_i\geq0\ ,i=1,\ldots,N\\
					\sum_{i=1}^{N}\mathcal{E}_i\leq\bar{B}}}
		\sum_{i=1}^{N}p(1-p)^{i-1}\frac{1}{2}\log(1+\mathcal{E}_i),
\end{align*}
where the last equality is obtained by choosing $X_i\sim\mathcal{N}(0,\mathcal{E}_i)$
independent of each other.
This gives the RHS in~\eqref{eq:bounds}.\qed

%---------------------------------------------------%
\subsection{Lower Bound}
\label{subsec:lower_bound}

To lower bound~\eqref{eq:EH_capacity},
we can choose a suboptimal distribution for which the $X_i$'s are independent,
i.e. $p(x^N)=\prod_{i=1}^{N}p(x_i)$,
and each of them satisfies $|X_i|^2\leq\mathcal{E}_i$ a.s. for some 
$\mathcal{E}_i\geq0$.
To satisfy the total energy constraint we must have 
$\sum_{i=1}^{N}\mathcal{E}_i\leq\bar{B}$.
Under this input distribution, we have for every $i$: $I(X_i;Y_i)\leq\frac{1}{2}\log(1+\mathcal{E}_i)\leq\frac{1}{2}\log(1+\bar{B})$, and thus for every $N$:
\begin{align*}
	\hspace{1em}\lefteqn{\hspace{-1em}\sum_{k=1}^{N}p^2(1-p)^{k-1}I(X^k;X^k+Z^k)}\\
	&=\sum_{i=1}^{N}p(1-p)^{i-1}[1-(1-p)^{N-i+1}]I(X_i;X_i+Z_i)\\
	&\geq\sum_{i=1}^{N}p(1-p)^{i-1}I(X_i;X_i+Z_i)\\*
	&\hspace{4em}
		-N(1-p)^N\frac{1}{2}\log(1+\bar{B}).
\end{align*}
Taking $N\to\infty$, the second term vanishes, and we are left with the following lower bound:
\[
	C_\text{RBR}\geq\lim_{N\to\infty}\max_{\substack{\{\mathcal{E}_i\}_{i=1}^{N}:\\ \mathcal{E}_i\geq0\ ,i=1,\ldots,N\\ \sum_{i=1}^{N}\mathcal{E}_i\leq\bar{B}}}
		\sum_{i=1}^{N}p(1-p)^{i-1}I(X_i;X_i+Z_i).
\]
Since $p(x_i)$ was arbitrary, we can choose it to maximize  $I(X_i;X_i+Z_i)$. 
We obtain
\begin{equation}
\label{eq:EH_1dlowerbound}
	C_\text{RBR}\geq\lim_{N\to\infty}
	\max_{\substack{\{\mathcal{E}_i\}_{i=1}^{N}:\\ \mathcal{E}_i\geq0\ ,i=1,\ldots,N\\ \sum_{i=1}^{N}\mathcal{E}_i\leq\bar{B}}}
	\sum_{i=1}^{N}p(1-p)^{i-1}C_\text{Smith}(\mathcal{E}_i),
\end{equation}
where
\begin{equation}
\label{eq:smith_capacity_def}
C_\text{Smith}(\mathcal{E})\triangleq\max_{p(x):\ X^2\leq\mathcal{E}}I(X;Y)
\end{equation}
is the capacity of the amplitude constrained scalar AWGN channel studied in~\cite{Smith1971},
where the optimal value for this mutual information maximization problem is found. 
Unfortunately, it is not tractable.
Hence, as done in~\cite{DongOzgur2014,OzarowWyner1990},
we lower bound it as follows:
\begin{align}
	C_\text{Smith}(\mathcal{E})
	&\geq \frac{1}{2}\log\left(1+\frac{\mathcal{E}}{3}\right)
		-\frac{1}{2}\log\left(\frac{\pi e}{6}\right)\nonumber\\
	&\geq \frac{1}{2}\log(1+\mathcal{E})
		-\frac{1}{2}\log\left(\frac{\pi e}{2}\right).
		\label{eq:smith_lower_bound}
\end{align}
Plugging this into~\eqref{eq:EH_1dlowerbound} gives the LHS of~\eqref{eq:bounds}.

%\begin{align}
%	C_\text{EH}
%	&\geq \lim_{N\to\infty}\max_{\substack{\{\mathcal{E}_i\}_{i=1}^{N}:\\
%						\mathcal{E}_i\geq0\ ,i=1,\ldots,N\\
%						\sum_{i=1}^{N}\mathcal{E}_i\leq\bar{B}}}
%		\max_{\substack{\{p(x_i)\}_{i=1}^{N}:\\
%						|X_i|^2\leq\mathcal{E}_i}}\nonumber\\*
%		&\qquad\qquad\qquad\qquad
%			\sum_{k=1}^{N}p^2(1-p)^{k-1}I(X^k;Y^k)\nonumber\\
%	&=\lim_{N\to\infty} \max_{\substack{\{\mathcal{E}_i\}_{i=1}^{N}:\\
%						\mathcal{E}_i\ ,i=1,\ldots,N\\
%						\sum_{i=1}^{N}\mathcal{E}_i\leq\bar{B}}}
%		\max_{\substack{\{p(x_i)\}_{i=1}^{N}:\\
%						|X_i|^2\leq\mathcal{E}_i}}\nonumber\\*
%		&\qquad
%			\sum_{i=1}^{N}p(1-p)^{i-1}[1-(1-p)^{N-i+1}]I(X_i;Y_i)\nonumber\\
%	&\geq\lim_{N\to\infty} \max_{\substack{\{\mathcal{E}_i\}_{i=1}^{N}:\\
%						\mathcal{E}_i\ ,i=1,\ldots,N\\
%						\sum_{i=1}^{N}\mathcal{E}_i\leq\bar{B}}}
%		\max_{\substack{\{p(x_i)\}_{i=1}^{N}:\\
%						|X_i|^2\leq\mathcal{E}_i}}\label{eq:extra_term}\\*
%		&\qquad
%			\sum_{i=1}^{N}p(1-p)^{i-1}I(X_i;Y_i)
%			-Np(1-p)^N\frac{1}{2}\log(1+\bar{B})\nonumber\\
%	&=\lim_{N\to\infty}\max_{\substack{\{\mathcal{E}_i\}_{i=1}^{N}:\\
%						\mathcal{E}_i\geq0\ ,i=1,\ldots,N\\
%						\sum_{i=1}^{N}\mathcal{E}_i\leq\bar{B}}}
%		\sum_{i=1}^{N}p(1-p)^{i-1}
%			\max_{|X_i|\leq\sqrt{\mathcal{E}_i}} I(X_i;Y_i),
%		\label{eq:EH_1dlowerbound}
%\end{align}
%where~\eqref{eq:extra_term} is due to the fact that $I(X_i;Y_i)\leq\frac{1}{2}\log(1+\bar{B})$.

% use section* for acknowledgement
%\section*{Acknowledgment}

\bibliographystyle{IEEEtran}
\bibliography{IEEEabrv,energy_harvesting}

%\pagebreak

\appendices

%---------------------------------------------------%
\section{Noncausal Capacity: Proof of Theorem~\ref{thm:capacity_noncausal}}
%---------------------------------------------------%
\label{sec:noncausal_capacity_proof}

With noncausal energy arrival information, the RBR channel model remains the same except for equation~\eqref{eq:EH_encoding}, which becomes
\[
	f_t:\mathcal{M}\times \mathcal{E}^n\to\mathcal{X}
		,\qquad t=1,\ldots,n.
\]

The $(N)$-clipping channel now has causal knowledge of the clipping lengths at the transmitter.
It is a well known fact that with i.i.d. side information available at both the transmitter and the receiver, the capacity is the same whether it is available causally or noncausally.
Therefore we can repeat the steps of \Fref{sec:proof_mainthm}, while altering the definition of the $(N)$-clipping channel by replacing~\eqref{eq:clp_encoding_SI} with
\[
	\tilde{f}^{(N)}_i:\mathcal{M}\times\mathcal{L}^{n}
	\to\tilde{\mathcal{X}}^{(N)},
	\qquad i=1,\ldots,n,
\]
concluding that
\begin{equation}
	C_\text{RBR,noncausal}=p\cdot C_\text{clp,noncausal}.
\end{equation}

The capacity of a memoryless channel with i.i.d. side information available at both the transmitter and the receiver is given by (see e.g.~\cite[Chapter 7]{ElGamalKim2011})
\[
	C_\text{clp,noncausal}^{(N)}=
	\max_{\substack{p(\tilde{x}^{(N)}|l):\\ \|\tilde{X}^{(N)}\|^2\leq\bar{B}}} I(\tilde{X}^{(N)};\tilde{Y}^{(N)}|L).
\]
Then, following the same steps leading to~\eqref{eq:clp_capacity} in Section~\ref{sec:main_results}, we get
\begin{align*}
	\hspace{2em}\lefteqn{\hspace{-2em}C_\text{clp,noncausal}^{(N)}}\\
	&=\sum_{k=1}^{\infty}p(1-p)^{k-1}
		\max_{\substack{p(\tilde{x}^{(N)}|l=k):\\
			\|\tilde{X}^{(N)}\|^2\leq\bar{B}}}
		I(\tilde{X}^{(N)};\tilde{Y}^{(N)}|L=k)\\
	&=\sum_{k=1}^{N}p(1-p)^{k-1}
		\max_{\substack{p(x^k):\\ \|X^k\|^2\leq\bar{B}}}
		I(X^k;X^k+Z^k).
\end{align*}
Finally, we have
\begin{align*}
	C_\text{RBR,noncausal}&=p\cdot\lim_{N\to\infty}
		C_\text{clp,noncausal}^{(N)}\\
	&=\sum_{k=1}^{\infty}p^2(1-p)^{k-1}
		\max_{\substack{p(x^k):\\ \|X^k\|^2\leq\bar{B}}}
		I(X^k;X^k+Z^k).
\end{align*}
which is the expression in~\eqref{eq:EH_capacity_noncausal}.\qed

%---------------------------------------------------%
\section{Noncausal Capacity Bounds: Proof of Proposition~\ref{prop:bounds_noncausal}}
%---------------------------------------------------%
\label{sec:bounds_noncausal}

We prove Proposition~\ref{prop:bounds_noncausal} following the same steps as in the previous sections.
For the upper bound, we can similarly relax the energy constraint in~\eqref{eq:EH_capacity_noncausal} to be only in expectation,
thus giving an upper bound:
\begin{align*}
	C_\text{RBR,noncausal}
	&\leq\sum_{k=1}^{\infty}p^2(1-p)^{k-1}
		\max	_{\substack{p(x^k):\\ 
			\mathbb{E}\|X^k\|^2\leq\bar{B}}}
		I(X^k;X^k+Z^k)\\
	&=\sum_{k=1}^{\infty}p^2(1-p)^{k-1}
		\frac{k}{2}\log(1+\bar{B}/k),
\end{align*}
where the last line is a known result for vector Gaussian channels.

We derive a lower bound for~\eqref{eq:EH_capacity_noncausal} by considering a suboptimal input distribution.
Let $p(x^k)=\prod_{i=1}^{k}p_k(x_i)$, where $p_k(x)$ is some distribution for which $X^2\leq\bar{B}/k$ a.s.
We then have the following lower bound:
\begin{equation}
\label{eq:noncausal_lower_bound_tight}
	C_\text{RBR,noncausal}\geq
	\sum_{k=1}^{\infty}p^2(1-p)^{k-1}kC_\text{Smith}
		(\bar{B}/k),
\end{equation}
where $C_\text{Smith}(\mathcal{E})$ is defined in~\eqref{eq:smith_capacity_def}.
The expression has been evaluated numerically using the algorithm suggested in~\cite{Smith1971} and plotted in Figure~\ref{fig:noncausal_causal}.

Next, using~\eqref{eq:smith_lower_bound}, we can further lower bound~\eqref{eq:noncausal_lower_bound_tight} to obtain~\eqref{eq:bounds_noncausal}.

%---------------------------------------------------%
\section{Optimal Online Power Control}
%---------------------------------------------------%
\label{sec:kkt_solution}

%\begin{align*}
%	\text{maximize}
%		&\qquad\sum_{i=1}^{\infty}p(1-p)^{i-1}\frac{1}{2}\log(1+\mathcal{E}_i)\\
%	\text{subject to}&\qquad
%		\mathcal{E}_i\geq0\quad,i=1,2,\ldots\\
%		&\qquad\sum_{i=1}^{\infty}\mathcal{E}_i\leq B
%\end{align*}

We solve the maximization problem in~\eqref{eq:def_Cbar}.
Writing the problem in standard form and using KKT conditions, we have for $i=1,\ldots,N$:
\[
	-p(1-p)^{i-1}\frac{1}{2}\frac{1}{1+\mathcal{E}_i}-\lambda_i+\tilde{\lambda}=0,
\]
with $\lambda_i,\tilde{\lambda}\geq0$ and the complementary slackness conditions:
$\lambda_i\mathcal{E}_i=0$ and $\tilde{\lambda}(\sum_{i=1}^{N}\mathcal{E}_i-B)=0$.

To obtain the non-zero values of $\mathcal{E}_i$, we set $\lambda_i=0$:
\begin{equation}\label{eq:Esolution}
	\mathcal{E}_i=\frac{p(1-p)^{i-1}}{2\tilde{\lambda}}-1.
\end{equation}
Since $\mathcal{E}_i\geq0$, this implies $\tilde{\lambda}\leq\frac{p(1-p)^{i-1}}{2}$ for all $i$ for which $\mathcal{E}_i>0$.
This is a decreasing function of $i$, therefore there exists an integer $\tilde{N}$ such that $\mathcal{E}_i>0$ for $i=1,\ldots,\tilde{N}$ and $\mathcal{E}_i=0$ for $i>\tilde{N}$.
Since we are solving this problem for $N\to\infty$, it is safe to assume $\tilde{N}<N$.

%We obtain an inequality for $\tilde{N}$ using the total energy constraint:
Next we apply the total energy constraint (which must hold with equality, since increasing $\mathcal{E}_i$ for any $i$ will only increase the objective):
\[
	\bar{B}=\sum_{i=1}^{N}\mathcal{E}_i
	=\sum_{i=1}^{\tilde{N}}
		\left(\frac{p(1-p)^{i-1}}
		{2\tilde{\lambda}}-1\right)
	=\frac{1-(1-p)^{\tilde{N}}}{2\tilde{\lambda}}
		-\tilde{N}
\]
\begin{equation}\label{eq:lambda_tilde}
	\tilde{\lambda}=\frac{1-(1-p)^{\tilde{N}}}
	{2(\bar{B}+\tilde{N})}.
\end{equation}
Since $\tilde{\lambda}\leq\frac{p(1-p)^{i-1}}{2}$ for all $i=1,\ldots,\tilde{N}$, we must have $\tilde{\lambda}>\frac{p(1-p)^{\tilde{N}}}{2}$:
%$\tilde{N}+1$ is the smallest integer satisfying $\tilde{\lambda}>\frac{p(1-p)^{i-1}}{2}$, therefore we have
\[
	\frac{1-(1-p)^{\tilde{N}}}{2(\bar{B}+\tilde{N})}
	> \frac{p(1-p)^{\tilde{N}}}{2}
\]
\[
	1 > (1-p)^{\tilde{N}}[1+p(\bar{B}+\tilde{N})],
\]
which implies $\tilde{N}$ is the smallest integer satisfying this inequality.\qed

\end{document}